\newtheorem{theorem}{Theorem}
\newtheorem{lemma}[theorem]{Lemma}
\newtheorem{corollary}[theorem]{Corollary}
\title[Optimal Retirement Income Tontines]{Optimal Retirement Income Tontines}
\author[M.A. Milevsky and T.S. Salisbury]{Moshe A. Milevsky and Thomas S. Salisbury}
\thanks{Milevsky is an Associate Professor of Finance at the Schulich School of Business,
York University, and Executive Director of the IFID Centre. Salisbury is a Professor in the Department of Mathematics and Statistics at York University. The authors acknowledge funding from Netspar (Milevsky), a Schulich Research Fellowship (Milevsky) and from NSERC (Salisbury). They wish to thank Rejo Peter, Dajena Collaku, Simon Dabrowski, Alexandra Macqueen and Branislav Nikolic for research as well as editorial assistance. The contact author (Salisbury) can be reached at: salt@yorku.ca} 
\date{24 March 2015 (Final Version 4.0)}				
\begin{document}
\maketitle

\begin{abstract} 

Tontines were once a popular type of mortality-linked investment pool. They promised enormous rewards to the last survivors at the expense of those died early. And, while this design \emph{appealed to the gambling instinct}, it is a suboptimal way to generate retirement income. Indeed, actuarially-fair life annuities making constant payments -- where the insurance company is exposed to longevity risk -- induce greater lifetime utility. However, tontines do not have to be structured the historical way, i.e. with a constant cash flow shared amongst a shrinking group of survivors. Moreover, insurance companies do not sell actuarially-fair life annuities, in part due to aggregate longevity risk. 

We derive the tontine structure that maximizes lifetime utility. Technically speaking we solve the Euler-Lagrange equation and examine its sensitivity to (i.) the size of the tontine pool $n$, and (ii.) individual longevity risk aversion $\gamma$. We examine how the optimal tontine varies with $\gamma$ and $n$, and prove some qualitative theorems about the optimal payout. Interestingly, Lorenzo de Tonti's original structure is optimal in the limit as longevity risk aversion $\gamma \to \infty$. We define the \emph{natural tontine} as the function for which the payout declines in exact proportion to the survival probabilities, which we show is near-optimal for all $\gamma$ and $n$. We conclude by comparing the utility of optimal tontines to the utility of loaded life annuities under reasonable demographic and economic conditions and find that the life annuity's advantage over the optimal tontine is minimal. 

In sum, this paper's contribution is to (i.) rekindle a discussion about a retirement income product that has been long neglected, and (ii.) leverage economic theory as well as tools from mathematical finance to design the next generation of tontine annuities.

\end{abstract}

\newpage

\section{Introduction and Executive Summary}

As policymakers, academics and the public at large grow increasingly concerned about the cost of an aging society, we believe it is worthwhile to go back in time and examine the capital market instruments used to finance retirement in a period before social insurance, defined benefit (DB) pensions, and annuity companies. Indeed, in the latter part of the 17th century and for almost two centuries afterwards, one of the most popular ``Retirement Income'' investments in the developed world was not a stock, bond or a mutual fund (although they were available). In fact, the method used by many individuals to generate income in the senior years of the lifecycle was a so-called tontine scheme sponsored by government\footnote{Sources: Weir (1989), Poterba (2005) and McKeever (2009).}. Part annuity, part lottery and part hedge fund, the tontine -- which recently celebrated its 360th birthday -- offered a lifetime of income that increased as other members of the tontine pool died off and their money was distributed to survivors.  

The classical tontine investment pool is quite distinct from its public image as a lottery for centenarians in which the longest survivor wins all the money in a pool. In fact, the tontine is both more subtle and more elegant. For those readers who are not familiar with the historical tontine, here is a simple example. Imagine a group of 1,000 soon-to-be retirees who band together and pool \$1,000 each to purchase a million-dollar U.S. Treasury bond (with a very long, or even perpetual maturity date) paying 3\% coupons\footnote{We state it this way for simplicity of exposition. Strictly speaking, the purchase should be the coupon stream only, since the principal is never returned.}. The bond generates \$30,000 in interest yearly, which is split amongst the 1,000 participants in the pool, for a 30,000 / 1,000 = guaranteed \$30 dividend per member. A custodian holds the big bond -- taking no risk and requiring no capital -- and charges a trivial fee to administer the annual dividends. In a tontine scheme the members agree that if and when they die, their guaranteed \$30 dividend is split amongst those who still happen to be alive. 

For example, if one decade later only 800 original investors are alive, the \$30,000 coupon is divided into 800, for a \$37.50 dividend each. Of this, \$30 is the guaranteed dividend and \$7.50 is \emph{other people's money}. Then, if two decades later only 100 survive, the annual cash flow to survivors is \$300, which is a \$30 guaranteed dividend plus \$270. When only 30 remain, they each receive \$1,000 in dividends. The extra payments -- above and beyond the guaranteed \$30 dividend --  are the mortality credits. In fact, under this scheme payments are expected to increase at the rate of mortality, which is a type of super-inflation hedge. 

The tontine, of course, differs from a conventional life annuity. Although both offer income for life and pool longevity risk, the mechanics and therefore the cost to the investor are quite different.  The annuity (or its issuing insurance company) promises predictable guaranteed lifetime payments, but this comes at a cost  -- and regulator-imposed capital requirements -- that inevitably makes its way to the annuitant. Indeed, the actuaries make very conservative assumptions regarding how long annuitants are likely to live. In future (especially under the regulations of Solvency II) annuity-issuers might have to hold capital and reserves against aggregate longevity risk. In contrast, the tontine custodian divides the variable $X$ (bond coupons received) by the variable $Y$ (participants alive) and sends out cheques\footnote{In between the pure tontine and the conventional annuity lies the \emph{participating annuity} which shares some longevity risk within a pool using a long-term smoothing scheme. Unfortunately, the actual formula for determining the smoothing mechanism is anything but smooth and lacks transparency. That said, self-annuitization schemes (GSA), proposed by Piggott, Valdez and Detzel (2005) have been growing in academic popularity.}. It is cleaner to administer, less capital-intensive and -- in its traditional form -- results in an increasing payment stream over time (assuming, of course, that you are alive). 

Underlying this paper is the argument that properly-designed tontines should be on the menu of products available to individuals as they transition into their retirement years.

\subsection{History vs. the Future}

Lorenzo de Tonti was a colorful Italian banker who in the 1650s promoted the scheme which now shares his name. He described the tontine as a mixture of lottery and insurance equally driven by fear of old age, and economic greed.  Tontines were first introduced in Holland, then very successfully in France\footnote{Source: Jennings and Trout (1982).} and a few decades later, in 1693, England's King William III presided over the first government-issued tontine. In this first government-issued tontine, in exchange for a minimum payment of \pounds 100, participants were given the option of (a) investing in a tontine scheme paying guaranteed \pounds 10 dividends until the year 1700 and then \pounds 7 thereafter, or (b) a life annuity paying \pounds 14 for life, but with no survivorship benefits. In our view, the choice between two possible ways of financing retirement income is something that could conceivably be offered again in the future. Perhaps the choice between a tontine (in which longevity risk is pooled) and a life annuity (in which payments are guaranteed) is a paradigm for many of the choices retirees, and society, now face. 

Back to the 1693 tontine of King William, over a thousand Englishmen (but very few women) decided to invest in the tontine. The sums involved weren't trivial. The \pounds 100 entry fee would be worth close to \pounds 100,000 today\footnote{Source: Lewin (2003).}. So, this was no impulse lottery purchase. Rather, some investors picked the tontine because they wanted the skewness -- that is, the potential for a very large payout if they survived -- while others wanted the more predictable annuity income option. The oldest survivor of King William's tontine of 1693 lived to age 100, earning \pounds 1,000 in her final year -- and no doubt well tended-to by her family.\footnote{Source: Finlaison (1829).} See Milevsky (2014) for more on King William's tontine.

Our main practical points will be that (i) tontines and life annuities have co-existed in the past, so perhaps they can in the future; (ii) there is no reason to construct the tontine payout function such that the last survivor receives hundreds of multiples of their initial investment;  and (iii) a properly-constructed tontine can result in lifetime utility that is comparable to the utility of a life annuity. Indeed, both the tontine and annuity hedge individuals against their idiosyncratic mortality risk: The difference is that the tontine leaves individuals exposed to aggregate longevity risk, while the annuity transfers this risk to the vendor -- for a fee. If this fee is sufficiently high, the tontine will offer greater utility than the annuity.

\subsection{Outline of the Paper}

To our knowledge, we are the first to derive the properties of an optimized tontine payout structure, in contrast to the optimal life annuity which is well-known in the literature. We believe this is part of the ``optimal insurance design'' literature, similar to the recent paper by Bernard, et. al. (2015). An interesting insurance product, that lies between a tontine and annuity, is the \emph{pooled annuity}. Stamos (2008) analyzes the optimal design of such a pool, while Donnelly, Guillen and Nielsen (2014) compare its utility to that of alternatives and a recent article by Donnelly (2015) examines the ``fairness'' of pooled annuity funds in general. As with a tontine, pooled annuity funds insulate the issuer from longevity risk. These hybrid designs yield utility intermediate between that of a tontine and an annuity, at the expense of a more complex product design (from the client point of view). We will discuss this, and review the broader literature, in Section \ref{lit}. 

The remainder of this paper is organized as follows. Section \ref{1693} very briefly describes the highlights of the first ever English government tontine of 1693, which interestingly had some aspects of our \emph{optimal tontine.} It also provides some motivation and colour to the paper. Section \ref{theory} is the theoretical core of the paper, which derives the properties of the optimal tontine structure and compares this optimal structure with the life annuity.  Section \ref{numerics} uses and applies the results from Section \ref{theory} and addresses how 21st-century tontines might be constructed. Section \ref{conc} concludes the paper and Section \ref{apen} is an appendix with most proofs and derivations.

\section{The First English Tontine: Non-Flat Payout}
\label{1693}

In early 1693, during the fourth year of the reign of King William III and Queen Mary II, the English Parliament passed the so-called \emph{Million Act}, which was an attempt to raise one million pounds towards carrying on the war against France.  The \emph{Million Act} specified that prior to May 1, 1693, any British native or foreigner could purchase a tontine share from the Exchequer for \pounds 100 and thus gain entry into the first British government tontine scheme.

For \pounds 100 an investor could select any nominee of any age -- including the investor himself -- on whose life the tontine would be contingent. Dividend payments would be distributed to the investor as long as the nominee was still alive. Now, to put the magnitude of the minimal \pounds 100 investment in perspective, the average annual wage of building laborers in England during the latter part of the 17th century was approximately \pounds 16 and a few shillings per year\footnote{Source: Lewin (2003).} -- so the entry investment in the tontine pool far exceeded the average industrial wage, and the annual dividends alone might serve as a decent pension for a common labourer. Accordingly, it is quite plausible to argue that the 1693 tontine was an investment for the rich and perhaps even one of the first exclusive hedge funds.

This was a simpler structure compared to the original tontine scheme envisioned by Lorenzo de Tonti in the year 1653, which involved multiple classes. In the 1693 English tontine, each share of \pounds 100 would entitle the investor to an annual dividend of \pounds 10 for seven years (until June 1700), after which the dividends would be reduced to \pounds 7 per share. The 10\% and 7\% tontine dividend rate exceeded prevailing (market) interest rates in England at the end of the 17th century, which were officially capped at 6\%.\footnote{Source: Homer and Sylla (2005).} Note the declining structure of the interest payments, which is a preview of our soon-to-come discussion (in Section 3) about the optimal tontine payout function.

Moving on to the annuity side of the offering, to further entice investors to participate in the tontine scheme, the Act included a unique ``sweetener" or bonus provision. It stipulated that if the entire \pounds 1,000,000 target wasn't subscribed by May 1693 -- thus reducing the size of the lottery payoff for the final survivor -- the investors who had enrolled in the tontine during the six-month subscription period (starting in November 1692) would have the option of converting their \pounds 100 tontine shares into a life annuity paying \pounds 14 per year. Under this alternative, the 14\% dividend payments were structured as a conventional life annuity with no group survivorship benefits or tontine features. (Think of a single premium life annuity.) This ``option to choose'' between a tontine and a life annuity is quite intriguing and fodder for further research. 

Alas, the funds raised by early May 1963 fell far short of the million-pound target. According to records maintained by the Office of the Exchequer, stored within the Archives of the British Library\footnote{Source:  Dickson (1967).}, a total of only \pounds 377,000 was subscribed and approximately 3,750 people were nominated\footnote{Note that one nominee could have multiple shares of \pounds 100 contingent on their life.} to the tontine prior to May 1693. This then triggered the option to exchange the tontine into a 14\% life annuity.

Interestingly, it seems that a total of 1,013 nominees (representing 1,081 tontine shares) remained in the original 10\%/7\% tontine, while the other two-thirds elected to convert their tontine shares into a 14\% life annuity contingent on the same nominee. (See Milevsky (2014) for an extensive historical discussion of the 1693 tontine, the characteristics of investors who selected the tontine versus the annuity and some of the empirical factors driving the decisions made by participants. Here we only describe the highlights.)
 
At first glance, it is rather puzzling why anyone would stay in the tontine pool instead of switching to the life annuity. On a present-value basis, a cash flow of \pounds10 for seven years and \pounds 7 per annum thereafter is much less valuable compared to \pounds 14 per annum for life. As discussed in Milevsky (2014), the actuarial present value of the 10\%/7\% combination at the 6\% official interest rate was worth approximately \pounds 133 at the (typical nominee) age of ten, whereas the value of the life annuity was worth almost \pounds 185. (Remember, the original investment was \pounds 100.) It should come as no surprise that the British government was losing money on these 14\% annuities\footnote{Source: Finlaison (1829).} -- and they were offering these terms to anyone, regardless of how young they were. Why did anyone stay in the tontine pool?

Could differing views on mortality and/or risk aversion explain why some investors switched, while others didn't? Can a rational explanation for this seemingly irrational choice be found? The framework we will introduce in Section 3 might help shed light on this decision, or at least rationalize the choice.

Motivated by the historical tontine, in the next section we present an economic theory to describe and understand the choice theory illuminating who might elect to participate in a tontine and who might choose a life annuity, as well as the properties of a tontine that are likely to generate the highest lifetime utility. To pre-empt the result, we will show that although the life annuity clearly dominated the tontine in terms of lifetime utility, a declining tontine payout structure (as in the 1693 tontine) is in fact optimal. So, having a tontine that pays 10\% in the first few years (to the entire pool) and then 7\% for the remaining years (to the entire pool) is a historical aberration but will be shown to be based in sound economic theory.

\section{Tontine vs. Annuity: Economic Theory}
\label{theory}

We assume an objective survival function ${}_tp_x$, for an individual aged $x$ to survive $t$ years. One purpose of the tontine structure is to insulate the issuer from the risk of a stochastic (or simply mis-specified) survival function, but in this paper we assume ${}_tp_x$ is given, and applies to all individuals. We leave for future work the exploration of subjective vs. objective survival rates, or stochastic hazard rates. We assume that the tontine pays out continuously, as opposed to quarterly or monthly. For simplicity, we assume a constant risk-free interest rate $r$, though it would be simple to incorporate a term structure (all funds contributed are invested at time 0, so payouts could be funded from a static bond portfolio as easily as from a money market account). Our basic annuity involves annuitants (who are also the nominees) each paying \$1 to the insurer initially, and receiving in return an income stream of $c(t)\,dt$ for life. The constraint on these annuities is that they are fairly priced, in other words that with a sufficiently large client base, the initial payments invested at the risk-free rate will fund the called-for payments in perpetuity (later we discuss the implications of insurance loadings). This implies a constraint on the annuity payout function $c(t)$, namely that 
\begin{equation}
\label{annuityconstraint}
\int_0^\infty e^{-rt}{}_tp_x\, c(t)\,dt=1.
\end{equation}
Though $c(t)$ is the payout rate per survivor, note that the payout rate per initial dollar invested is ${}_tp_x\,c(t)$. We will return to this later. 

Letting $u(c)$ denote the instantaneous utility of consumption (a.k.a. the felicity function), a rational annuitant (with lifetime $\zeta$) having no bequest motive will choose a life annuity payout function for which $c(t)$ maximizes the discounted lifetime utility:
\begin{equation}
E[\int_0^\zeta e^{-rt}u(c(t))\,dt]=\int_0^\infty e^{-rt}{}_tp_x\, u(c(t))\,dt
\label{annuityutilityspecification}
\end{equation}
where $r$ is (also) the subjective discount rate (SDR), all subject to the constraint \eqref{annuityconstraint}. 

By the Euler-Lagrange theorem\footnote{Source: Elsgolc (2007) page 51 or Gelfand and Fomin (2000), page 15.}, this implies the existence of a constant $\lambda$ such that 
\begin{equation}
e^{-rt}{}_tp_x\,u'(c(t))=\lambda e^{-rt}{}_tp_x \quad\text{for every $t$.}
\label{lagrangecondition}
\end{equation}
In other words, $u'(c(t))=\lambda$ is constant, so provided that utility function $u(c)$ is strictly concave, the optimal annuity payout function $c(t)$ is also constant. That constant is now determined by \eqref{annuityconstraint}, showing the following:
\begin{theorem} Optimized life annuities have constant $c(t)\equiv c_0$, where 
$$
c_0=\Big[\int_0^\infty e^{-rt}{}_tp_x\,dt\Big]^{-1}.
$$ 
\end{theorem}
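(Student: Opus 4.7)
The plan is to formalize the Euler–Lagrange calculation already sketched in the text, and then verify that the unique critical point is actually a maximum.

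First I would set up the constrained optimization problem: maximize the functional
$$
J[c] = \int_0^\infty e^{-rt}{}_tp_x\, u(c(t))\,dt
$$
over positive payout functions $c(\cdot)$, subject to $G[c] := \int_0^\infty e^{-rt}{}_tp_x\, c(t)\,dt = 1$. Introducing a Lagrange multiplier $\lambda$, I would form $L[c] = J[c] - \lambda(G[c]-1)$ and take the Gâteaux derivative in the direction of an arbitrary admissible variation $h(t)$. Since the integrands depend on $c$ pointwise (and not on derivatives of $c$), the Euler–Lagrange condition reduces to a pointwise first-order condition: for (almost) every $t$ with ${}_tp_x>0$,
$$
e^{-rt}{}_tp_x\bigl(u'(c(t))-\lambda\bigr) = 0,
$$
which is exactly equation \eqref{lagrangecondition}. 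Dividing out the positive factor $e^{-rt}{}_tp_x$ gives $u'(c(t))=\lambda$.

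Next I would invoke strict concavity of $u$: this makes $u'$ strictly decreasing, hence injective, so $u'(c(t))=\lambda$ forces $c(t)\equiv c_0$ for some constant $c_0$ depending only on $\lambda$. Substituting this constant into the budget constraint \eqref{annuityconstraint} yields
$$
c_0\int_0^\infty e^{-rt}{}_tp_x\,dt = 1,
$$
from which the claimed formula for $c_0$ follows immediately (and this in turn determines the multiplier $\lambda = u'(c_0)$).

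The remaining subtlety — and what I would call the main obstacle — is to argue that this critical point is in fact a global maximum rather than merely a stationary point. The cleanest way is to observe that $J$ is concave in $c$ (because $u$ is concave and the weighting measure $e^{-rt}{}_tp_x\,dt$ is nonnegative), while the constraint set $\{c : G[c]=1,\ c\ge 0\}$ is convex. A concave functional on a convex set attains its maximum at any interior critical point of the Lagrangian, so the constant $c_0$ above is indeed optimal and, by strict concavity, uniquely so. Implicit integrability hypotheses (e.g.\ that $\int_0^\infty e^{-rt}{}_tp_x\,dt <\infty$, guaranteed since ${}_tp_x\le 1$ and $r>0$) should be noted in passing to ensure $c_0$ is well-defined and finite.
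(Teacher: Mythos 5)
Your proposal is correct and follows essentially the same route as the paper: apply the Euler--Lagrange/Lagrange-multiplier condition to get $u'(c(t))=\lambda$ pointwise, use strict concavity of $u$ to conclude $c(t)$ is constant, and pin down the constant from the budget constraint \eqref{annuityconstraint}. Your additional verification that the stationary point is a global maximum (concavity of the objective on a convex constraint set) is a sound extra step that the paper leaves implicit.
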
 

This result can be traced back to Yaari (1965) who showed that the optimal (retirement) consumption profile is constant (flat) and that 100\% of wealth is annuitized when there is no bequest motive. For more details and an alternate proof, see the excellent book by Cannon and Tonks (2008) and specifically the discussion on annuity demand theory in chapter 7.

\subsection{Optimal Tontine Payout}
In practice, insurance companies funding the life annuity $c(t)$ are exposed to both systematic longevity risk (due to randomness or uncertainty in ${}_tp_x$), model risk (the risk that ${}_tp_x$ is mis-specified), as well as re-investment or interest rate risk (which is the uncertainty in $r$ over long horizons). The latter is not our focus here, so we will continue to assume that $r$ is a given constant for most of what follows. Note that even if re-investment rates were known with certainty, the insurance company would likely pay out less than the $c(t)$ implied by equation \eqref{annuityconstraint} as a result of required capital and reserves, effectively lowering the lifetime utility of the (annuity and the) retiree.

This brings us to the tontine structures we will consider as an alternative, in which a predetermined dollar amount is shared among survivors at every $t$. Let $d(t)$ be the rate at which funds are paid out per initial dollar invested, a.k.a. the tontine payout function. Our main point in this paper is that there is no reason for tontine payout function to be a constant fixed percentage of the initial dollar invested (e.g. 4\% or 7\%), as it was historically. In fact, we can pose the same question as considered above for annuities: what $d(t)$ is optimal for subscribers, subject to the constraint that sponsors of the tontine cannot sustain a loss? Note that the natural comparison is now between $d(t)$ and ${}_tp_x\,c(t)$, where $c(t)$ is the optimal annuity payout found above. 

Suppose there are initially $n$ subscribers to the tontine scheme, each depositing a dollar with the tontine sponsor. Let $N(t)$ be the random number of live subscribers at time $t$. Consider one of these subscribers. Given that this individual is alive, $N(t)-1\sim \text{Bin}(n-1,{}_tp_x)$. In other words, the number of other (live) subscribers at any time $t$ is binomially distributed with probability parameter ${}_tp_x$. 

So, as we found for the life annuity, this individual's discounted lifetime utility is
\begin{align*}
&E[\int_0^\zeta e^{-rt }u\Big(\frac{n d(t)}{N(t)}\Big)\,dt]=\int_0^\infty e^{-rt}{}_tp_x \,E[u\Big(\frac{n d(t)}{N(t)}\Big)\mid \zeta>t]\,dt\\
&\qquad=\int_0^\infty e^{-rt}{}_tp_x\sum_{k=0}^{n-1} \binom{n-1}{k}{}_tp_x^k(1-{}_tp_x)^{n-1-k}u\Big(\frac{nd(t)}{k+1}\Big)\,dt.
\end{align*}
The constraint on the tontine payout function $d(t)$ is that the initial deposit of $n$ should be sufficient to sustain withdrawals in perpetuity. Of course, at some point all subscribers will have died, so in fact the tontine sponsor will eventually be able to cease making payments, leaving a small remainder or windfall. But this time is not predetermined, so we treat that profit as an unavoidable feature of the tontine. Remember that we do not want to expose the sponsor to any longevity risk. It is the pool that bears this risk entirely. 

Our budget or pricing constraint is therefore that 
\begin{equation}
\label{tontineconstraint}
\int_0^\infty e^{-rt} d(t)\,dt=1.
\end{equation}

So, for example, if $d(t)=d_0$ is forced to be constant (the historical structure, which we call a \emph{flat tontine}), then the tontine payout function (rate) is simply $d_0=r$ (or somewhat more if a cap on permissible ages is imposed, replacing the upper bound of integration in \eqref{tontineconstraint} by a value less than infinity). We are instead searching for the \emph{optimal} $d(t)$ which we will find is far from constant.

By the Euler-Lagrange theorem from the Calculus of Variations, there is a constant $\lambda$ such that the optimal $d(t)$ satisfies 
\begin{equation}
e^{-rt}{}_tp_x\sum_{k=0}^{n-1} \binom{n-1}{k}{}_tp_x^k(1-{}_tp_x)^{n-1-k}\frac{n}{k+1}u'\Big(\frac{nd(t)}{k+1}\Big)=\lambda e^{-rt}
\label{opton1}
\end{equation}
for every $t$. Note that this expression directly links individual utility $u(\cdot)$ to the optimal participating annuity. Recall that a tontine is an extreme case of participation or pooling of all longevity risk. Equation \eqref{opton1} dictates exactly how a risk-averse retiree will trade off consumption against longevity risk. In other words, we are not advocating an \emph{ad hoc} actuarial process for smoothing realized mortality experience. 

Note that an actual mortality hazard rate $\mu_x$ does not appear in the above equation -- it appears only implicitly, in both ${}_tp_x$ and $\lambda$ (which is determined by \eqref{tontineconstraint}). Therefore, we will simplify our notation by re-parametrizing in terms of the probability: Let $D_u(p)$ satisfy 
\begin{equation}
p\sum_{k=0}^{n-1} \binom{n-1}{k}p^k(1-p)^{n-1-k}\frac{n}{k+1}u'\Big(\frac{nD_u(p)}{k+1}\Big)=\lambda.
\end{equation}
Substituting $p=1$ into the above equation, it collapses to $u'(D_u(1))=\lambda$. 

\begin{theorem}
The optimal tontine structure is $d(t)=D_u({}_tp_x)$, where $\lambda$ is chosen so \eqref{tontineconstraint} holds. 
\end{theorem}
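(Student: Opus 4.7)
The plan is to repeat the Euler--Lagrange / Lagrange-multiplier argument used for the annuity case, but now applied to the more complicated integrand that mixes the deterministic payout $d(t)$ with the binomial expectation over the surviving pool. The optimization is of the form $\max_d F(d)$ subject to $G(d)=1$, where both $F$ and $G$ are integrals of the form $\int_0^\infty \Phi(t,d(t))\,dt$. Since the integrands are pointwise functions of $d(t)$ (no derivatives of $d$ appear), Euler--Lagrange reduces to the pointwise first-order condition $\partial_d \Phi_F = \lambda\,\partial_d \Phi_G$ for some Lagrange multiplier $\lambda$.

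Concretely, I would differentiate the integrand
$$
e^{-rt}{}_tp_x\sum_{k=0}^{n-1}\binom{n-1}{k}{}_tp_x^k(1-{}_tp_x)^{n-1-k}u\Big(\frac{nd(t)}{k+1}\Big)
$$
with respect to $d(t)$, carrying the derivative inside the finite sum, and set the result equal to $\lambda$ times $\partial_{d}(e^{-rt}d(t))=\lambda e^{-rt}$. The exponential $e^{-rt}$ cancels, yielding exactly
$$
{}_tp_x\sum_{k=0}^{n-1}\binom{n-1}{k}{}_tp_x^k(1-{}_tp_x)^{n-1-k}\frac{n}{k+1}u'\Big(\frac{nd(t)}{k+1}\Big)=\lambda,
$$
which is precisely the defining equation for $D_u(p)$ evaluated at $p={}_tp_x$. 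Hence $d(t)=D_u({}_tp_x)$ whenever a solution exists, and the multiplier $\lambda$ is then fixed by forcing the budget constraint \eqref{tontineconstraint} to hold.

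The main obstacle is checking that this critical point is actually a maximum and is unique. For uniqueness of $d(t)$ given $\lambda$, I would use strict concavity of $u$: since $u'$ is strictly decreasing, the left-hand side of the defining equation is a strictly decreasing function of $d$ for each fixed $p\in(0,1]$, so $D_u(p)$ is well-defined. For global optimality, note that concavity of $u$ makes the objective functional $F$ concave in $d$ while the constraint $G$ is linear in $d$; any critical point of the Lagrangian $F-\lambda(G-1)$ is therefore a global maximum on the feasible set. A secondary technical point, which I would handle briefly, is that $\lambda$ must be chosen so $\int_0^\infty e^{-rt}D_u({}_tp_x)\,dt=1$; monotonicity of $D_u(p)$ in $\lambda$ (inherited from strict monotonicity of $u'$) guarantees that such a $\lambda$ exists and is unique under mild integrability assumptions on $u'$ and ${}_tp_x$.
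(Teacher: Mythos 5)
Your argument is exactly the paper's: the authors likewise invoke the Euler--Lagrange theorem to get the pointwise first-order condition \eqref{opton1}, cancel the $e^{-rt}$ factor, and define $D_u(p)$ by the resulting equation with $\lambda$ fixed by the budget constraint \eqref{tontineconstraint}. Your added checks --- that strict concavity of $u$ makes the left-hand side strictly decreasing in $d$ (so $D_u(p)$ is well-defined) and that concavity of the objective plus linearity of the constraint upgrades the critical point to a global maximum --- go slightly beyond what the paper writes out, but they fill in the same argument rather than taking a different route.
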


It is feasible (but complicated) to solve this once a generic $u$ is given. But in the case of Constant Relative Risk Aversion (CRRA) utility it simplifies greatly. Let  $u(c)=c^{1-\gamma}/(1-\gamma)$ if $\gamma\neq 1$, and when $\gamma= 1$ take $u(c)=\log c$ instead. Define 
\begin{equation}
\theta_{n,\gamma}(p)=E\Big[\Big(\frac{n}{N(p)}\Big)^{1-\gamma}\Big]=\sum_{k=0}^{n-1} \binom{n-1}{k}p^{k}(1-p)^{n-1-k}\Big(\frac{n}{k+1}\Big)^{1-\gamma}
\end{equation}
where $N(p)-1\sim\text{Bin}(n-1,p)$. Set $\beta_{n,\gamma}(p)=p\theta_{n,\gamma}(p)$. Then 
\begin{corollary} 
\label{CRRAoptimum}
With CRRA utility, the optimal tontine has withdrawal rate
$D_{n,\gamma}^{\text{\rm OT}}(p)=D_{n,\gamma}^{\text{\rm OT}}(1)\beta_{n,\gamma}(p)^{1/\gamma}$, where
\begin{equation}
D_{n,\gamma}^{\text{\rm OT}}(1)=\Big[\int_0^\infty e^{-rt}\beta_{n,\gamma}({}_tp_x)^{1/\gamma}\,dt\Big]^{-1}.
\label{D(1)formula}
\end{equation}
\end{corollary}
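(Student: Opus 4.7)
The plan is to specialize the fixed-point condition for $D_u(p)$ derived just before the corollary to CRRA utility, and then algebraically identify the resulting expression with $\beta_{n,\gamma}(p)^{1/\gamma}$ up to a multiplicative constant that is pinned down by the budget constraint \eqref{tontineconstraint}.

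First I would substitute $u'(c)=c^{-\gamma}$ into the defining equation for $D_u(p)$, producing
\[
p\sum_{k=0}^{n-1}\binom{n-1}{k}p^{k}(1-p)^{n-1-k}\frac{n}{k+1}\Big(\frac{nD_u(p)}{k+1}\Big)^{-\gamma}=\lambda.
\]
The factor $D_u(p)^{-\gamma}$ pulls out of the sum, and rewriting $\frac{n}{k+1}\bigl(\frac{n}{k+1}\bigr)^{-\gamma}=n^{-1}\bigl(\frac{n}{k+1}\bigr)^{1-\gamma}$ collapses what remains inside the sum into exactly $\theta_{n,\gamma}(p)$. The equation therefore becomes $p\,\theta_{n,\gamma}(p)\,D_u(p)^{-\gamma}=\lambda$, that is $\beta_{n,\gamma}(p)=\lambda\, D_u(p)^{\gamma}$, giving
\[
D_u(p)=\lambda^{-1/\gamma}\,\beta_{n,\gamma}(p)^{1/\gamma}.
\]

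Next, I would evaluate this at $p=1$. Since $N(1)\equiv n$ with probability one, $\theta_{n,\gamma}(1)=1$ and hence $\beta_{n,\gamma}(1)=1$, so $\lambda^{-1/\gamma}=D_u(1)=D_{n,\gamma}^{\text{OT}}(1)$. This yields the announced form $D_{n,\gamma}^{\text{OT}}(p)=D_{n,\gamma}^{\text{OT}}(1)\,\beta_{n,\gamma}(p)^{1/\gamma}$. Finally, plugging $d(t)=D_{n,\gamma}^{\text{OT}}({}_tp_x)$ into the budget constraint \eqref{tontineconstraint} and solving for the prefactor produces the integral formula \eqref{D(1)formula} for $D_{n,\gamma}^{\text{OT}}(1)$.

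There is no real obstacle here beyond careful bookkeeping with the exponents of $n$, $p$, and $k+1$; the main thing to verify is that the combinatorial factor $\frac{n}{k+1}\,u'\bigl(\frac{nd}{k+1}\bigr)$ recombines into $\theta_{n,\gamma}(p)$ in exactly the form displayed. The logarithmic case $\gamma=1$ would be handled either by taking the limit $\gamma\to 1$ (noting that $\beta_{n,1}(p)^{1/1}=p\,\theta_{n,1}(p)$ is well-defined) or, equivalently, by substituting $u'(c)=1/c$ directly into the Euler--Lagrange condition and repeating the same algebra.
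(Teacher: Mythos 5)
Your proof is correct and takes essentially the same route as the paper's: substitute the CRRA marginal utility into the Euler--Lagrange condition, factor out $D_u(p)^{-\gamma}$ to get $p\,\theta_{n,\gamma}(p)\,D_u(p)^{-\gamma}=\lambda$, evaluate at $p=1$ to identify $\lambda=D_{n,\gamma}^{\text{OT}}(1)^{-\gamma}$, and close with the budget constraint \eqref{tontineconstraint}. (One trivial slip in an intermediate line: $\frac{n}{k+1}\bigl(\frac{n}{k+1}\bigr)^{-\gamma}=\bigl(\frac{n}{k+1}\bigr)^{1-\gamma}$ exactly, with no extra $n^{-1}$ factor; since the equation you actually carry forward is the correct one, nothing downstream is affected.)
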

\begin{proof}
Suppose $\gamma\neq 1$. Then the equation for $D_{n,\gamma}^{\text{OT}}(p)$  becomes that
\begin{equation}
D_{n,\gamma}^{\text{OT}}(p)^{-\gamma}p\theta_{n,\gamma}(p)=\lambda=D_{n,\gamma}^{OT}(1)^{-\gamma}.
\label{Dformula1}
\end{equation}
The constraint \eqref{tontineconstraint} now implies \eqref{D(1)formula}. A similar argument applies when $\gamma=1$. 
\end{proof}

\subsection{Payout Properties}
\label{properties}

It is worth emphasizing that $D_{n,\gamma}^{\text{OT}}(p)/D_{n,\gamma}^{\text{OT}}(1)=\beta_{n,\gamma}(p)^{1/\gamma}$ does not depend on the particular form of the mortality hazard rate $\mu_x$, or on the interest rate $r$, but only on the longevity risk aversion $\gamma$ and the number of initial subscribers to the tontine pool, $n$. In other words, the mortality hazard rate and $r$ enter into the expression for $D_{n,\gamma}^{\text{OT}}(p)$ only via the constant $D_{n,\gamma}^{\text{OT}}(1)$. 
We will prove the following in section \ref{proof1} of the appendix:
\begin{lemma} For any $n\ge 2$ and $\gamma>0$
\label{betabound}
\begin{enumerate}
\item $\beta_{n,\gamma}(p)$ increases with $p$;
\item $\beta_{n,\gamma}(0)=0$ and $\beta_{n,\gamma}(1)=1$;
\item $\beta_{n,1}(p)=p$,
\newline $\beta_{n,2}(p)=\frac{p}{n}(1+(n-1)p)$,
\newline $\beta_{n,3}(p)=\frac{p}{n^2}(1+3(n-1)p+(n-1)(n-2)p^2)$;
\item $\lim_{\gamma\to\infty}\beta_{n,\gamma}(p)=p^n$ and $\beta_{n,0}(p)=1-(1-p)^n$;
\item For $0<p<1$ we have that
$$
\text{$\beta_{n,\gamma}(p)$ is }
\begin{cases}
<p^\gamma, &\text{if $0<\gamma<1$}\\
>p^\gamma, &\text{if $1<\gamma$.}
\end{cases}
$$
\end{enumerate}
\end{lemma}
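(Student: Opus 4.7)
The plan is to establish a clean probabilistic representation of $\beta_{n,\gamma}(p)$ that reduces all five parts to short routine arguments. Using the index shift $j=k+1$ and the identity $\binom{n-1}{j-1}=(j/n)\binom{n}{j}$, I would first rewrite
$$
\beta_{n,\gamma}(p)=p\,\theta_{n,\gamma}(p)=\sum_{j=1}^{n}\binom{n}{j}p^{j}(1-p)^{n-j}\Big(\frac{j}{n}\Big)^{\gamma}=E\bigl[(N/n)^{\gamma}\mathbf{1}_{\{N\ge 1\}}\bigr],
$$
where $N\sim\text{Bin}(n,p)$. Since $(0/n)^{\gamma}=0$ for $\gamma>0$, the indicator can be dropped and $\beta_{n,\gamma}(p)$ is simply $E[(N/n)^{\gamma}]$. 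This identity is the technical heart of the argument; once it is in hand, the remaining items are essentially one-liners.

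For (a), I would couple the Bin$(n,p)$ variables monotonically in $p$ (for instance via $n$ i.i.d.\ Uniform$(0,1)$ thresholds), so that $(N/n)^{\gamma}$ is a.s.\ non-decreasing in $p$ and hence its expectation is monotone. For (b), the endpoints are immediate because $N$ is deterministic there: $N=0$ at $p=0$ gives $\beta_{n,\gamma}(0)=0$, while $N=n$ at $p=1$ gives $\beta_{n,\gamma}(1)=1$. Part (c) becomes a direct computation of the factorial moments $E[N(N-1)\cdots(N-\ell+1)]=n(n-1)\cdots(n-\ell+1)p^{\ell}$: expand $N$, $N^{2}$, $N^{3}$ in terms of falling factorials and divide by $n^{\gamma}$. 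For (d), substituting $\gamma=0$ into the definition yields $\beta_{n,0}(p)=P(N\ge 1)=1-(1-p)^{n}$, and the $\gamma\to\infty$ limit follows from bounded convergence, since $(N/n)^{\gamma}\to\mathbf{1}_{\{N=n\}}$ pointwise and is uniformly bounded by $1$, giving $P(N=n)=p^{n}$.

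Part (e) is handled by strict Jensen's inequality applied to $f(x)=x^{\gamma}$ on $[0,\infty)$, which is strictly convex for $\gamma>1$ and strictly concave for $0<\gamma<1$. Since $E[N/n]=p$ and $N/n$ is non-degenerate whenever $0<p<1$ (both $\{N=0\}$ and $\{N=n\}$ have positive probability and yield distinct values), Jensen's inequality is strict and delivers $\beta_{n,\gamma}(p)>p^{\gamma}$ for $\gamma>1$ and $\beta_{n,\gamma}(p)<p^{\gamma}$ for $0<\gamma<1$. The only genuine obstacle in the whole proof is the very first step: recognizing that the unwieldy sum defining $\beta_{n,\gamma}$ is nothing more than the $\gamma$-th moment of the normalized Bin$(n,p)$ count. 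Every other step is a one-line consequence of this identity.
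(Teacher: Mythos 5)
Your proof is correct, and it takes a genuinely different route from the paper's. The key identity you isolate,
$$
\beta_{n,\gamma}(p)=\sum_{j=1}^{n}\binom{n}{j}p^{j}(1-p)^{n-j}\Big(\frac{j}{n}\Big)^{\gamma}=E\big[(M/n)^{\gamma}\big],\qquad M\sim\text{Bin}(n,p),
$$
is valid (it is a size-biasing identity: the paper's $N(p)=1+\text{Bin}(n-1,p)$ is the law of $M$ conditioned on one fixed trial succeeding), and it does reduce all five parts to standard facts about binomial moments. The paper never passes to $\text{Bin}(n,p)$; it works with $N(p)$ throughout, proving (a) by an explicit derivative formula (Lemma \ref{derivativeofmeans}, giving $\beta'(p)=n^{1-\gamma}E[N^{\gamma}-(N-1)^{\gamma}]>0$), (d) via Lemma \ref{reciprocalbound}, and (e) via a H\"older inequality under the tilted sub-probability $dQ/dP=pn/N$ (Lemma \ref{Rlemma}), noting that only the case $\gamma\ge 2$ admits a direct Jensen argument in that parametrization. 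Your representation removes exactly that obstruction: with exponent $\gamma$ instead of $\gamma-1$ and $E[M/n]=p$, strict Jensen gives (e) in one line for all $\gamma>0$, and the monotone coupling gives (a) without differentiation. What the paper's heavier machinery buys is reuse: Lemma \ref{Rlemma} (monotonicity of $\beta_{\gamma}(p)^{1/\gamma}/p$ in both $\gamma$ and $p$) is invoked again in the proof of Theorem \ref{increasinggamma}, so it is not wasted effort there. Two small points to tidy: your coupling as stated yields only weak monotonicity in (a) --- add that $P(M(p_2)>M(p_1))=1-(1-(p_2-p_1))^{n}>0$ to upgrade to strict increase; and you reuse the letter $N$ for $\text{Bin}(n,p)$ while the paper reserves it for $1+\text{Bin}(n-1,p)$, so a different symbol would avoid confusion.
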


One immediate consequence of Corollary \ref{CRRAoptimum} and (a) of Lemma \ref{betabound} is that 
the optimal tontine payout rate $d(t)$ decreases with $t$. 

By (d) we have that $\lim_{\gamma\to\infty}\beta_{n,\gamma}(p)^{1/\gamma}=1$ for $p>0$, so by dominated convergence, $D_{n,\gamma}^{\text{OT}}(1)^{-1}=\int_0^\infty e^{-rt}\beta_{n,\gamma}({}_tp_x)^{1/\gamma}\,dt\to\frac1r$ as $\gamma\to\infty$. Therefore the (historical) flat tontine structure $D(p)=r$ is optimal in the limit as $\gamma\to\infty$. 

Likewise $\lim_{\gamma\downarrow 0}\beta_{n,\gamma}(p)^{1/\gamma}=0$ for $p<1$. Renormalizing, this implies that $D^{\text{OT}}_{n,\gamma}(p)$ concentrates increasingly near $p=1$. In other words, as we approach risk-neutrality, the optimal tontine comes closer and closer to exhausting itself immediately following $t=0$.

We define a {\it natural tontine} to have payout $d(t)=D_{\text{N}}({}_tp_x)$ where $D_{\text{N}}(p)$ is proportional to $p$, just as is the case for the annuity payment per initial dollar invested. Comparing the budget constraints \eqref{annuityconstraint} and \eqref{tontineconstraint}, we get that  $D_{\text{N}}(p)=pc_0$, so the natural tontine payout rate agrees with that of the annuity (which justifies our singling it out).  By Corollary \ref{CRRAoptimum} and (c) of Lemma \ref{betabound}, we see that the natural tontine is optimal for logarithmic utility $\gamma=1$. We will see in Section \ref{numerics} that the natural tontine is close to optimal when $n$ is large. We therefore propose the natural tontine as a reasonable structure for designing tontine products in practice, rather than expecting insurers to offer a range of products with differing $\gamma$'s. 

Figure \ref{fig5} shows the ratio of $D_{n,\gamma}^{\text{OT}}(p)$ to $D_{\text{N}}(p)$, with Gompertz hazard rate $\lambda(t)=\frac{1}{b}e^{\frac{x+t-m}{b}}$, and parametrized by $t$ via $p={}_tp_x$. This exponentially-rising hazard rate is the basic mortality assumption in much of the actuarial literature. The figure provides numerical evidence that higher risk aversion implies a preference for reducing consumption at early ages in favour of reserving funds to consume at advanced ages. 
To make this statement precise, define
$$
\Delta_{n,\gamma}(t)=\int_0^t e^{-rs}d(s)\,ds,
$$
namely the present value of payouts from the optimal tontine through time $t$, per initial dollar invested (or equivalently, the proportion of initial capital used to fund payouts till time $t$). We conjecture that
\begin{equation}
\gamma_1>\gamma_2\Rightarrow \Delta_{n,\gamma_1}(t) < \Delta_{n,\gamma_2}(t),
\label{depletioninequality}
\end{equation}
in any realistic situation (specifically, whenever the morality distribution has an increasing hazard rate). We have not yet succeeded in proving this conjecture, but instead will derive a number of partial results that support it. In fact, numerical evidence suggests that \eqref{depletioninequality} holds regardless of the mortality distribution, provided $\gamma_1$ is only of moderate size. It is in fact possible to construct pathological mortality distributions for which  \eqref{depletioninequality} fails when $\gamma_1$ and $\gamma_2$ are large. Specifically, this may actually happen during a low-hazard rate lull between two periods with high hazard rates. We discuss this at greater length in the appendix, where we will also prove the following instances of the above conjecture.
\begin{theorem}
\label{increasinggamma}
\eqref{depletioninequality} holds in the following situations:
\begin{enumerate}
\item For $0<\gamma_2\le 1\le \gamma_1$ and $t>0$ arbitrary;
\item For any fixed $\gamma_2$, $n$, and $t$, in the limit as $\gamma_1\to\infty$;
\item For some initial period of time $t\in(0,t_0)$, where $t_0$ depends on $\gamma_1>\gamma_2$ and $n$.
\end{enumerate}
\end{theorem}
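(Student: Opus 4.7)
My plan is to recast \eqref{depletioninequality} as a monotone-likelihood-ratio (MLR) condition on the weight functions $F_\gamma(s)=e^{-rs}\beta_{n,\gamma}({}_sp_x)^{1/\gamma}$, and then to analyse the relevant ratio through a convex-order comparison between binomial distributions. By Corollary~\ref{CRRAoptimum},
$$\Delta_{n,\gamma}(t)=\frac{\int_0^t F_\gamma(s)\,ds}{\int_0^\infty F_\gamma(u)\,du},$$
and a routine cross-multiplication shows that $\Delta_{n,\gamma_1}(t)\le\Delta_{n,\gamma_2}(t)$ holds as soon as $F_{\gamma_1}(s)/F_{\gamma_2}(s)$ is non-decreasing in $s$. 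The exponential factor cancels and $s\mapsto{}_sp_x$ is non-increasing, so this MLR condition is equivalent to asking that
$$p\longmapsto\frac{\beta_{n,\gamma_1}(p)^{1/\gamma_1}}{\beta_{n,\gamma_2}(p)^{1/\gamma_2}}$$
be non-increasing on $(0,1]$. That is the sole object I have to control.

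The simplification that makes everything tractable is the moment identity
$$\beta_{n,\gamma}(p)=n^{-\gamma}\,E[M(p)^\gamma],\qquad M(p)\sim\text{Bin}(n,p),$$
which drops out of the definition once one re-indexes $j=k+1$ and uses $\binom{n-1}{j-1}=\tfrac{j}{n}\binom{n}{j}$ (equivalently, once one recognises $N(p)$ as the size-biased transform of $M(p)$). It follows that $\beta_{n,\gamma}(p)^{1/\gamma}/p=\bigl(E[(M(p)/(np))^\gamma]\bigr)^{1/\gamma}$, so via the $\gamma=1$ midpoint decomposition
$$\frac{\beta_{n,\gamma_1}(p)^{1/\gamma_1}}{\beta_{n,\gamma_2}(p)^{1/\gamma_2}}=\frac{\beta_{n,\gamma_1}(p)^{1/\gamma_1}/p}{\beta_{n,\gamma_2}(p)^{1/\gamma_2}/p},$$
part (a) for $\gamma_2\le 1\le\gamma_1$ will follow once I verify that $p\mapsto E[(M(p)/(np))^\gamma]$ is non-increasing in $p$ when $\gamma\ge 1$ and non-decreasing in $p$ when $\gamma\le 1$. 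Both statements follow from a single convex-order assertion: for $0<p_1<p_2$ and independent $M_i\sim\text{Bin}(n,p_i)$, the scaled variable $p_2M_1$ dominates $p_1M_2$ in convex order. At the one-coin level the two-point random variables $p_2B(p_1)$ and $p_1B(p_2)$ share the mean $p_1p_2$, and a short stop-loss computation on the ranges $a\le 0$, $0<a\le p_1$, $p_1<a\le p_2$, $a>p_2$ verifies $E[(p_2B(p_1)-a)_+]\ge E[(p_1B(p_2)-a)_+]$ at every $a$. Convex order is preserved under independent convolution, so the relation lifts to the sums; evaluating on the convex map $x^\gamma$ (when $\gamma\ge 1$) or on the concave map $x^\gamma$ (when $0<\gamma\le 1$) delivers both monotonicities at once and completes (a).

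Parts (b) and (c) are shorter. For (b), Lemma~\ref{betabound}(d) combined with the uniform bound $\beta_{n,\gamma}(p)^{1/\gamma}\le 1$ lets me invoke dominated convergence to conclude $\Delta_{n,\gamma_1}(t)\to 1-e^{-rt}$ as $\gamma_1\to\infty$. On the other hand, since $\beta_{n,\gamma_2}$ is strictly increasing in $p$, running the MLR comparison of the previous paragraph with $F_{\gamma_2}$ against the constant weight $e^{-rs}$ gives the strict bound $\Delta_{n,\gamma_2}(t)>1-e^{-rt}$, and combining the two yields \eqref{depletioninequality} for $\gamma_1$ large enough. For (c), Lyapunov's inequality makes $\gamma\mapsto\beta_{n,\gamma}(p)^{1/\gamma}=\|M(p)/n\|_{L^\gamma}$ strictly increasing whenever $0<p<1$; integrating against $e^{-rt}\,dt$ yields the strict inequality $D_{n,\gamma_1}^{\text{OT}}(1)<D_{n,\gamma_2}^{\text{OT}}(1)$, and since $\Delta_{n,\gamma}(0)=0$ while $\Delta_{n,\gamma}'(0)=D_{n,\gamma}^{\text{OT}}(1)$, the continuous function $\Delta_{n,\gamma_2}(t)-\Delta_{n,\gamma_1}(t)$ has strictly positive right-derivative at the origin and is therefore positive on some interval $(0,t_0)$, with $t_0$ depending on $\gamma_1$, $\gamma_2$, $n$ and the mortality law.

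The genuinely hard step is the convex-order comparison in (a): the reductions of \eqref{depletioninequality} to an MLR condition and of the MLR condition to a convex-order statement about binomials are mostly formal, but the stop-loss verification at the one-coin level is where the real work lives, and it succeeds only because $p_2B(p_1)$ and $p_1B(p_2)$ share a common mean. That common-mean symmetry is exactly what breaks when one tries to cross $\gamma=1$ in a single step, which is why the present route covers the window $\gamma_2\le 1\le\gamma_1$ and no more, consistent with the authors' warning that the full conjecture can fail on pathological hazard profiles.
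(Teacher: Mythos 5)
Your proof is correct, and its skeleton coincides with the paper's: both arguments reduce \eqref{depletioninequality} to showing that $p\mapsto\beta_{n,\gamma_1}(p)^{1/\gamma_1}/\beta_{n,\gamma_2}(p)^{1/\gamma_2}$ is non-increasing (your ``cross-multiplication'' step is exactly Lemma~\ref{Rmonotonicity}, proved there via a single sign change of the integrand rather than your Chebyshev-type rearrangement, but with identical content), and both split part (a) through $\gamma=1$ by controlling $R_\gamma(p)=\beta_{n,\gamma}(p)^{1/\gamma}/p$ separately on each side. Where you genuinely diverge is in the proof of that key monotonicity: the paper differentiates $\log R_\gamma$ using the binomial derivative identity of Lemma~\ref{derivativeofmeans} and reads off the sign of $E[N^\gamma-(N-1)^\gamma-\gamma N^{\gamma-1}]$ from Taylor's theorem, whereas you first observe the size-biasing identity $\beta_{n,\gamma}(p)=E[(M(p)/n)^\gamma]$ with $M(p)\sim\text{Bin}(n,p)$ --- which is the paper's sub-probability $Q$ of Lemma~\ref{Rlemma} in disguise --- and then establish the convex-order comparison $p_2M(p_1)\succeq p_1M(p_2)$ by a one-coin stop-loss check plus closure under independent convolution. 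Your route is derivative-free and more transparently probabilistic, and it makes visible exactly why the method stops at $\gamma=1$ (the common mean $p_1p_2$ at the coin level); the paper's calculus route is shorter on the page and, via the same change of measure and H\"older, simultaneously delivers the monotonicity of $R_\gamma$ in $\gamma$ that underlies \eqref{Rbiggerthanone}, Lemma~\ref{betabound}(e), and Theorem~\ref{utilityinequality} --- a fact you independently reprove for part (c) via Lyapunov, which is fine. Your parts (b) and (c) match the paper's in substance (indeed your (b), passing to the limit of $\Delta_{n,\gamma_1}$ by dominated convergence and comparing against the constant weight $e^{-rs}$, is arguably more explicit than the paper's appeal to $R_{\infty,\gamma_2}$). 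The only point worth tightening is strictness: \eqref{depletioninequality} is a strict inequality, so you should note that since $\gamma_1>\gamma_2$ at least one of the two factors in your ratio is \emph{strictly} monotone on $(0,1)$ (strict convex order against a strictly convex or concave power), which makes the rearrangement integral strictly negative for $t>0$.
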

We will also prove a modified version of \eqref{depletioninequality}, for fixed $\gamma_1>\gamma_2$ and $t$, in the limit as $n\to\infty$ (see Theorem \ref{asymptotictheorem}). 

In Section \ref{numerics} we will provide a variety of numerical examples that illustrate the optimal tontine payout function $d(t)$ as a function of longevity risk aversion 
$\gamma$ and the initial size of the tontine pool $n$.

\subsection{Optimal Tontine Utility vs. Annuity Utility} We now compare the annuity and tontine. 
\label{tontineannuitycagematch}
Let $U_{n,\gamma}^{\text{OT}}$ denote the utility of the optimal tontine. To compute this, suppose $\gamma\neq 1$, and observe that
\begin{equation}
\frac{D_{n,\gamma}^{\text{OT}}(p)^{1-\gamma}}{1-\gamma}p\theta_{n,\gamma}(p)=\frac{D_{n,\gamma}^{\text{OT}}(p)}{1-\gamma}D_{n,\gamma}^{\text{OT}}(p)^{-\gamma}p\theta_{n,\gamma}(p)
=\frac{D_{n,\gamma}^{\text{OT}}(p)}{1-\gamma}D_{n,\gamma}^{\text{OT}}(1)^{-\gamma}
\end{equation}
by \eqref{Dformula1}. The utility of the optimal tontine is therefore precisely
\begin{align*}
U_{n,\gamma}^{\text{OT}}
&=\int_0^\infty e^{-rt}\frac{D_{n,\gamma}^{\text{OT}}({}_tp_x)^{1-\gamma}}{1-\gamma}{}_tp_x\,\theta_{n,\gamma}({}_tp_x)\,dt 
= \frac{D_{n,\gamma}^{\text{OT}}(1)^{-\gamma}}{1-\gamma}\int_0^\infty e^{-rt}D_{n,\gamma}^{\text{OT}}({}_tp_x)\,dt\\
&=\frac{D_{n,\gamma}^{\text{OT}}(1)^{-\gamma}}{1-\gamma}
=\frac{1}{1-\gamma}\Big(\int_0^\infty e^{-rt}\beta_{n,\gamma}({}_tp_x)^{1/\gamma}\,dt\Big)^\gamma
\end{align*}
by \eqref{tontineconstraint} and \eqref{D(1)formula}. 

Consider instead the utility $U_\gamma^{\text{A}}$ provided by the annuity, namely 
\begin{equation}
U_\gamma^{\text{A}}=\int_0^\infty e^{-rt}{}_tp_x \frac{c_0^{1-\gamma}}{1-\gamma}\,dt
=\frac{\int_0^\infty e^{-rt}{}_tp_x\,dt}{(1-\gamma)\Big(\int_0^\infty e^{-rt}{}_tp_x\,dt\Big)^{1-\gamma}}
=\frac{1}{1-\gamma}\Big(\int_0^\infty e^{-rt}{}_tp_x\,dt\Big)^\gamma.
\end{equation}

\begin{theorem}
$U_{n,\gamma}^{\text{\rm OT}}<U_\gamma^{\text{\rm A}}$ for any $n$ and $\gamma>0$.
\label{utilityinequality}
\end{theorem}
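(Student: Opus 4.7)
The plan is to bound the tontine's utility above by that of a deterministic, annuity-like consumption profile, and then invoke Theorem 1. This handles all $\gamma>0$ uniformly, without splitting into the cases $\gamma<1$, $\gamma=1$, $\gamma>1$ (as one would if one proceeded directly via Lemma \ref{betabound}(e), which moreover doesn't cover the $\gamma=1$ case).

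Fix any admissible tontine payout $d(t)$. Since $u$ is concave, Jensen's inequality applied to $N(t)$ conditional on the subscriber being alive gives
\[
E\Big[u\Big(\frac{nd(t)}{N(t)}\Big)\,\Big|\,\zeta>t\Big]\le u\Big(d(t)\,E\Big[\frac{n}{N(t)}\,\Big|\,\zeta>t\Big]\Big)=u(\tilde c(t)),
\]
where $\tilde c(t)=d(t)\,(1-(1-{}_tp_x)^n)/{}_tp_x$; here I use $\beta_{n,0}(p)=p\,\theta_{n,0}(p)=1-(1-p)^n$ from Lemma \ref{betabound}(d) to evaluate the conditional expectation. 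Integrating against $e^{-rt}\,{}_tp_x\,dt$ yields $U_{n,\gamma}^{\text{OT}}\le\int_0^\infty e^{-rt}{}_tp_x\,u(\tilde c(t))\,dt$.

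Next I verify that $\tilde c$ is feasible for the annuity problem. By the tontine budget \eqref{tontineconstraint},
\[
\int_0^\infty e^{-rt}{}_tp_x\,\tilde c(t)\,dt=\int_0^\infty e^{-rt}d(t)\,(1-(1-{}_tp_x)^n)\,dt\le\int_0^\infty e^{-rt}d(t)\,dt=1.
\]
Since $u$ is strictly increasing, rescaling $\tilde c$ up by the reciprocal of this integral (which is at least one) produces a payout satisfying \eqref{annuityconstraint} with equality and of at least as much utility. Theorem 1 then bounds its utility by $U_\gamma^{\text{A}}$, giving $U_{n,\gamma}^{\text{OT}}\le U_\gamma^{\text{A}}$.

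Strict inequality follows from the fact that any realistic survival function has ${}_tp_x\in(0,1)$ on a set of positive Lebesgue measure. On that set the budget slack $1-(1-{}_tp_x)^n<1$ is strict for every $n\ge 1$, and when $n\ge 2$ Jensen's inequality above is also strict by strict concavity of $u$. Either observation upgrades $\le$ to $<$. The only delicate step in this plan is the passage from the equality constraint in Theorem 1 to a bound under the relaxed inequality constraint, but this is handled immediately by the monotonicity of $u$.
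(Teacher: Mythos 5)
Your proof is correct, and it takes a genuinely different route from the paper's. The paper works entirely within the CRRA framework: it computes $U^{\text{OT}}_{n,\gamma}=\frac{1}{1-\gamma}\bigl(\int_0^\infty e^{-rt}\beta_{n,\gamma}({}_tp_x)^{1/\gamma}\,dt\bigr)^{\gamma}$ in closed form and compares it to $U^{\text{A}}_\gamma$ via part (e) of Lemma \ref{betabound} (itself a H\"older-type estimate, Lemma \ref{Rlemma}), with the sign of $1-\gamma$ handled in two cases and a separate appendix lemma ($E[\log N(p)]>\log(np)$) covering $\gamma=1$. You instead dominate the tontine's random consumption by its conditional mean via Jensen, observe that the resulting deterministic profile $\tilde c$ is sub-budget for the annuity problem because $E[n/N(p)\mid \zeta>t]=(1-(1-p)^n)/p<1/p$ (this is exactly the paper's Lemma \ref{reciprocalbound}), rescale to restore the budget with equality, and invoke Theorem 1. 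This buys you three things: a single argument uniform in $\gamma$, including $\gamma=1$; validity for an arbitrary increasing, strictly concave felicity $u$ rather than just CRRA; and the formally stronger conclusion that \emph{every} admissible tontine payout, not merely the optimal one, is strictly dominated by the actuarially fair annuity. What it does not give you is the quantitative comparison $\beta_{n,\gamma}(p)\lessgtr p^\gamma$, which the paper extracts from the same circle of ideas and reuses elsewhere (e.g.\ in Theorem \ref{increasinggamma}). Your strictness discussion is also sound: for $n\ge 2$ Jensen is strict wherever ${}_tp_x\in(0,1)$, and for all $n\ge 1$ the budget slack is strict on that set (note you implicitly use $d(t)>0$ there, which holds for the optimal tontine); either observation alone upgrades the inequality to a strict one.
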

\begin{proof}
For $\gamma\neq 1$ this follows from (e) of Lemma \ref{betabound} and the calculations given above. We show the case $\gamma=1$ in the appendix. Of course, the conclusion is economically plausible, as the tontine retains risk that the annuity doesn't.\end{proof}

\subsection{Indifference Annuity Loading}
The insurer offering an annuity will be modelled as setting aside some fraction of the initial deposits to fund the annuity's costs.  In other words, a fraction $\delta$ of the initial deposits are deducted initially, to fund risk management, capital reserves, etc. The balance, invested at the risk-free rate $r$ will fund the annuity. Therefore, with loading, \eqref{annuityconstraint} becomes that $\int_0^\infty e^{-rt}{}_tp_x\, c(t)\,dt=1-\delta$, which implies that $c(t)\equiv c_1=(1-\delta)c_0$ is the optimal payout structure for the annuity. The utility of the loaded annuity is therefore 
$$
U^{\text{LA}}_{\gamma,\delta}=
\int_0^\infty e^{-rt}{}_tp_x \frac{c_1^{1-\gamma}}{1-\gamma}\,dt
=\frac{(1-\delta)^{1-\gamma}\int_0^\infty e^{-rt}{}_tp_x\,dt}{(1-\gamma)\Big(\int_0^\infty e^{-rt}{}_tp_x\,dt\Big)^{1-\gamma}}
=\frac{c_0^{-\gamma}}{1-\gamma}(1-\delta)^{1-\gamma}
$$
for $\gamma\neq 1$, and $\frac{\log(c_0)+\log(1-\delta)}{c_0}$ for $\gamma=1$. 

In Section \ref{numerics} we will consider numerically the \emph{indifference loading} $\delta$ that, when applied to the annuity, makes an individual indifferent between the annuity and a tontine, i.e. 
$U^{\text{LA}}_{\gamma,\delta}=U_{n,\gamma}^{\text{OT}}$. 
It turns out that the loading $\delta$ decreases with $n$, in such a way that the \emph{total} loading $n\delta$ stays roughly stable. In other words, there is at most a fixed amount (roughly) that the insurer can deduct from the \emph{aggregate} annuity pool, regardless of the number of participants, before individuals start to derive greater utility from the tontine. We will illustrate this observation, at least for $1<\gamma\le 2$, by proving the following inequality in the appendix:
\begin{theorem}
\label{loadinginequality}
Suppose that $1<\gamma\le 2$. Then 
$\delta<\frac{1}{n}\big(\frac{c_0}{r}-1\big)$.
\end{theorem}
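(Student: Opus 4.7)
The plan is to reduce the indifference condition $U^{\text{LA}}_{\gamma,\delta}=U^{\text{OT}}_{n,\gamma}$ to a scalar inequality, then prove it via a sharp pointwise bound on $\beta_{n,\gamma}(p)^{1/\gamma}$ combined with an elementary convexity estimate. Equating the utility formulas (for $\gamma\ne 1$) gives $(1-\delta)^{1-\gamma}=(c_0/D^{\text{OT}}_{n,\gamma}(1))^\gamma$, which for $\gamma>1$ rearranges to
$$\delta = 1-(A/B)^\alpha,\qquad \alpha = \gamma/(\gamma-1)\ge 2,$$
where $A=1/c_0=\int_0^\infty e^{-rt}{}_tp_x\,dt$ and $B=1/D^{\text{OT}}_{n,\gamma}(1)=\int_0^\infty e^{-rt}\beta_{n,\gamma}({}_tp_x)^{1/\gamma}\,dt$. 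Setting $y=(B-A)/A$ and $L=c_0/r-1=(1/r-A)/A$, the target becomes $1-(1+y)^{-\alpha}<L/n$.

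The key technical step is the pointwise bound
$$\beta_{n,\gamma}(p)^{1/\gamma}\le p+\frac{\gamma-1}{\gamma}\cdot\frac{1-p}{n},\qquad 1<\gamma\le 2,\ 0\le p\le 1,$$
with strict inequality for $p\in(0,1)$. To prove it, write $\theta_{n,\gamma}(p)=E[(N/n)^{\gamma-1}]$ with $N-1\sim\text{Bin}(n-1,p)$. Since $x^{\gamma-1}$ is concave for $\gamma-1\in(0,1]$, Jensen gives $\theta_{n,\gamma}(p)\le(E[N/n])^{\gamma-1}=(p+(1-p)/n)^{\gamma-1}$, so $\beta_{n,\gamma}(p)^{1/\gamma}\le p(1+(1-p)/(np))^{(\gamma-1)/\gamma}$. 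Applying the concavity estimate $(1+x)^q\le 1+qx$ with $q=(\gamma-1)/\gamma\in(0,1/2]$ and $x=(1-p)/(np)$ yields the bound.

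Integrating the pointwise bound against $e^{-rt}$ with $p={}_tp_x$ yields $B-A\le\frac{\gamma-1}{\gamma n}(1/r-A)$, i.e.\ $y\le L/(\alpha n)$. The function $g(y)=\alpha y-[1-(1+y)^{-\alpha}]$ satisfies $g(0)=0$ and $g'(y)=\alpha[1-(1+y)^{-\alpha-1}]>0$ for $y>0$, so $1-(1+y)^{-\alpha}<\alpha y$ strictly whenever $y>0$. Chaining these estimates produces $\delta<\alpha y\le L/n$, as desired (under the standing assumption of non-degenerate mortality, which ensures $y>0$).

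The main obstacle is calibrating the constant $(\gamma-1)/\gamma$ in the pointwise bound so it cancels the factor $\alpha$ introduced by the final convexity estimate. A cruder bound such as $\beta_{n,\gamma}(p)^{1/\gamma}\le p+(1-p)/n$ (available via weighted AM-GM) would only yield $\delta<\alpha L/n$ and lose a factor of $\alpha\ge 2$. The tight bound works because the exponent $q=(\gamma-1)/\gamma$ used in the concavity step is precisely $1/\alpha$, so the two estimates telescope to produce exactly the coefficient $1/n$ in front of $L$.
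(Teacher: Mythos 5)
Your proof is correct and follows essentially the same route as the paper's: both arguments hinge on the pointwise bound $\beta_{n,\gamma}(p)^{1/\gamma}\le p+\tfrac{\gamma-1}{\gamma n}(1-p)$ (you obtain it via Jensen plus Bernoulli's inequality, the paper via two tangent-line estimates for concave functions — the same idea), followed by integration and the concavity of $1-c^{-\gamma/(\gamma-1)}$ to pass from the bound on $B/A$ to the bound on $\delta$. No gaps; the telescoping of the exponent $(\gamma-1)/\gamma$ against $\alpha=\gamma/(\gamma-1)$ that you highlight is exactly the mechanism in the paper's proof as well.
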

Note that $c_0>r$, since $c_0^{-1}=\int_0^\infty e^{-rt}{}_tp_x\,dt<\int_0^\infty e^{-rt}\,dt=r^{-1}$.

\subsection{Additional Considerations}

There are several issues we do not address here which we openly acknowledge and leave for future research. In particular, there is the role of \emph{credit risk} as well as the impact of \emph{stochastic mortality} which will change the dynamics between tontines and annuities. The existence of credit risk is a much greater concern for the buyers of life annuities, vs. tontines, given the risk assumed by the insurance sponsor. Likewise, under a stochastic mortality model the tontine payout would be more variable and uncertain, which might reduce the utility of the tontine relative to the life annuity. On the other hand, the capital charges added to the market price of the life annuity would be higher in a stochastic mortality model. We appreciate that these are an unanswered (or unaddressed) questions and leave the examination of the robustness of the {\it natural tontine} payout in a full stochastic mortality environment to a subsequent paper.  

We note that there is also the question of asymmetric mortality, in which the individual believes that his or her (subjective) hazard rate is less than that of the typical tontine investor (for whom the tontine payout function is optimized). Indeed, this is potentially the explanation for the significant fraction of investors in King William's tontine, who did not exercise their option to convert to an annuity. They (might have) thought their nominee was \emph{much} healthier than everyone else, and the only way to capitalize on that was to invest in the tontine. This leads to the idea that the {\it mortality loading}, or level of asymmetric mortality, would induce an individual to prefer the tontine to an annuity. This, obviously, ties into the issue of stochastic mortality and is being addressed in a follow-up paper by Milevsky and Salisbury (2015).

In the next section, we focus on the numerical implications of our current model.

\section{Numerical Examples for the Optimal Tontine}
\label{numerics}

Figure \ref{fig1} displays the range of possible 4\% flat tontine dividends over time, assuming an initial pool of $n=400$ nominees, under a Gompertz law of mortality with parameters $m=88.721$ and $b=10$. This mortality basis corresponds to a survival probability of ${}_{35}p_{65}=0.05$, i.e. from age 65 to age 100 and is the baseline value for several of our numerical examples. The figure clearly shows an increasing payment stream conditional on survival, which isn't the optimal function. 

\begin{center}
{\bf Figure \ref{fig1} goes here}
\end{center}

Indeed, in such a (traditional, historical) tontine scheme, the initial expected payment is quite low, relative to what a life annuity might offer in the early years of retirement; while the payment in the final years -- for those fortunate enough to survive -- would be both very high and quite variable. It is not surprising then that this form of tontine is both suboptimal from an economic utility point of view and also isn't very appealing to individuals (with reasonable risk aversions) who want to maximize their standard of living over their entire retirement lifespan.

\begin{center}
{\bf Figure \ref{fig2} goes here}
\end{center}

In contrast to Figure \ref{fig1} which displays the sub-optimal flat tontine, Figure \ref{fig2} displays the range of outcomes from the optimal tontine payout function, under the same interest rate and mortality basis. To be very specific, Figure \ref{fig2} is computed by solving for the value of $D_{n,\gamma}^{\text{OT}}(1)$ and then constructing $D_{n,\gamma}^{\text{OT}}(_tp_x)$ for $n=400, r=0.04$ and $\gamma=1$.  Once the payout function is known for all $t$, the number of survivors at the 10th and 90th percentile of the binomial distribution is used to bracket the range of the payout from age 65 to age 100. Clearly, the expected payout per survivor is relatively constant over the retirement years, which is much more appealing intuitively. Moreover, the discounted expected utility from a tontine payout such as the one displayed in Figure \ref{fig2} is much higher than the utility of the one displayed in Figure \ref{fig1}. 

\begin{center}
{\bf Table \ref{table06} goes here}
\end{center}

Table \ref{table06} displays the optimal tontine payout function for a very small pool of size $n=25$. These values correspond to the $D_{n,\gamma}^{\text{OT}}(_tp_x)$ values derived in section \ref{theory}. Notice how the optimal tontine payout function is quite similar (identical in the first significant digit) regardless of the individual's Longevity Risk Aversion (LoRA) $\gamma$, even when the tontine pool is relatively small at $n=25$. The minimum guaranteed dividend starts off at about 7\% at age 65 and then declines to approximately 1\% at age 95. Of course, the actual cash flow payout to an individual, conditional on being alive does not necessarily decline and in reality stays relatively constant. 

\begin{center}
{\bf Table \ref{table07} goes here}
\end{center}

Table \ref{table07} displays utility indifference loadings $\delta$ for a participant at age 60. Notice how even a retiree with a very high level of Longevity Risk Aversion (LoRA) $\gamma$, will select a tontine (with pool size $n \geq 20$) instead of a life annuity if the insurance loading is greater than 7.5\%; For less extreme levels of risk aversion, and larger pools, the required loading is much smaller (e.g. 30 b.p.). Recall that this is a one-time charge at the time of initiation, not an annualized value over the duration of the contract (which would be a significantly smaller value). 

\begin{center}
{\bf Table \ref{table08} goes here}
\end{center}

Table \ref{table08} computes certainty equivalent factors associated with natural tontines. See Section \ref{moreoncertaintyequiv} for further discussion of this comparison. If an individual with LoRA $\gamma \neq 1$ is faced with a tontine structure that is only optimal for someone with LoRA $\gamma=1$ (i.e. logarithmic utility) the welfare loss is minuscule. This is one reason we advocate the \emph{natural tontine} payout function, which is only precisely optimal for $\gamma=1$, as the basis for 21st-century tontines. 

\begin{center}
{\bf Figure \ref{fig3} goes here}
\end{center}

Figure \ref{fig3} compares the optimal tontine payout function $d(t)$ for different levels of Longevity Risk Aversion $\gamma$, and shows that the difference is barely noticeable when the tontine pool size is greater than $n=250$, mainly due to the effect of the law of large numbers. This curve traces the \emph{minimum} dividend that a survivor can expect to receive at various ages (i.e. when all purchasers survive). The median is (obviously) much higher (see Figure \ref{fig2}). 
\begin{center}
{\bf Figure \ref{fig4} goes here}
\end{center}

Figure \ref{fig4} illustrates that the optimal tontine payout function for someone with logarithmic $\gamma=1$ utility starts off paying the exact same rate as a life annuity regardless of the number of participants in the tontine pool, ie $d(0)=c_0$. But, for higher levels of longevity risk aversion $\gamma$ and a relatively smaller tontine pool, the function starts off at a lower value and declines at a slower rate. The perpetuity curve corresponds to $\gamma=\infty$.

\begin{center}
{\bf Figure \ref{fig5} goes here}
\end{center}

Figure \ref{fig5} shows that if we compare two retirees, the one who is more averse to longevity risk will prefer a higher Guaranteed Minimum Payout Rate (GMPR) at advanced ages. In exchange they will accept a lower GMPR at younger ages. We choose to quantify this by taking the natural tontine payout as our baseline, since this is the product structure we advocate. We take quite a small pool size here ($n=25$) to illustrate the effect. The effect persists at larger pool sizes, but is much less dramatic.

\begin{center}
{\bf Figure \ref{fig6} goes here}
\end{center}

Figure \ref{fig6} shows an actuarially-fair life annuity that guarantees 7.5\% for life starting at age 65. It provides more utility than an optimal tontine regardless of Longevity Risk Aversion (LoRA) or the size of the tontine pool. But, once an insurance loading is included, driving the annuity yield under the initial payout from the optimal tontine, the utility of the life annuity might be lower. The indifference loading is $\delta$ and is reported in Table \ref{table07}.

So here is our main takeaway and idea in the paper, once again. The historical tontine in which dividends to the entire pool are a constant (e.g. 4\%) interest rate over the entire retirement horizon are suboptimal because they create an increasing consumption profile that is both variable and undesirable. However, a tontine scheme in which interest payments to the pool early on are higher (e.g. 8\%) and then decline over time, so that the few winning centenarians receive a much lower interest rate (e.g. 1\%) is in fact the optimal design. Coincidently, King William's 1693 tontine had a similar declining structure of interest payments to the pool, which was quite rare historically. 

We are careful to distinguish between the guaranteed \emph{interest} rate (e.g. 8\% or 1\%) paid to the entire pool, and the expected \emph{dividend} to the individual investor in the optimal tontine, which will be relatively constant over time, as is evident from Figure \ref{fig2}. Of course, the present value of the interest paid to the entire pool over time is exactly equal to the original contribution made by the pool itself. We are simply re-arranging and parsing cash flows of identical present value, in a different manner over time.

We have shown that a tontine provides less utility than an actuarially-fair life annuity, which is reasonable given that the tontine exposes investors to longevity risk. What is striking is that the utility difference from a properly-designed tontine scheme is actually quite small when compared to an actuarially-fair life annuity, which is the workhorse of the pension economics and lifecycle literature. In fact, the utility from a tontine might actually be higher than the utility generated by a pure life annuity when the insurance (commission, capital cost, etc.) one-time loading exceeds 10\%. This result should not negate or be viewed as conflicting with the wide-ranging annuity literature which proves the optimality of life annuities in a lifecycle model. Both tontines and annuities hedge idiosyncratic mortality risk. In fact, what we show is that it is still optimal to fully hedge the remaining systemic longevity risk, but the instrument that one uses to do so depends on the relative costs. In other words, \emph{sharing} longevity risk amongst a relatively small ($n \leq 100$) pool of people doesn't create large dis-utilities or welfare losses, at least within a classical rational model. This finding can also be viewed as a further endorsement of the participating life annuity, which lies in between the tontine and the conventional life annuity.

\section{ Literature Review}
\label{lit}

This is not the place -- nor do we have the space -- for a full review of the literature on tontines, so we provide a selected list of key articles for those interested in further reading. For literature and all sources available for the 1693 tontine, we refer to the earlier mentioned historical paper by Milevsky (2014) as well as the book by Milevsky (2015).

More generally, the original tontine proposal by Lorenzo de Tonti appears in French in Tonti (1654) (and in English translation in the cited reference). The review article by Kopf (1927) and the book by O'Donnell (1936) are quite dated, but document how the historical tontine operated, discussing its checkered history, and providing a readable biography of some of its earliest promoters in Denmark, Holland, France and England. The monograph by Cooper (1972) is devoted entirely to tontines and the foundations of the 19th-century (U.S.) tontine insurance industry, which is based on the tontine concept but is somewhat different because of the savings and lapsation component. In a widely-cited article, Ransom and Sutch (1987) provide the background and story of the banning of tontine insurance in New York State, and then eventually the entire U.S. The comprehensive monograph by Jennings and Trout (1982) reviews the history of tontines, with particular emphasis on the French period, while carefully documenting payout rates and yields from most known tontines. For those interested in the pricing of mortality-contingent claims during the 17th and 18th century, as well as the history and background of the people involved, see Alter (1986), Poitras (2000), Hald (2003), Poterba (2005), Ciecka (2008a, 2008b), Rothschild (2009) as well as Bellhouse (2011), and of course, Homer and Sylla (2005) for the relevant interest rates. 

More recently, the newspaper article by Chancellor (2001), the book by Lewin (2003) and especially the recent review by McKeever (2009) all provide a very good history of tontines and discuss the possibility of a tontine revival. The standard actuarial textbooks, such as Promislow (2011) or Pitacco, et al. (2009) for example, each have a few pages devoted to the historical tontine principal. 

More relevant to the modern design of participating annuities and sharing longevity risk in the 21st century, a series of recent papers on pooled annuity funds have resurrected the topic. For example Piggot, Valdez and Detzel (2005), Valdez, Piggott and Wang (2006), Stamos (2008), Richter and Weber (2011), Donnelly, Guillen and Nielsen (2013) as well as Qiao and Sherris (2013) have all attempted to reintroduce and analyze tontine-like structures in one form or another. In these proposed structures, investors contribute to an initial pool of capital, which is invested, and then paid out over time to the survivors. It need not be the case that investments are risk free (as we have assumed for the tontines we analyze), so the control variables are the asset allocation and payout strategy. The prospectus given to pooled annuity investors should therefore specify both.

Donnelly, Guillen and Nielsen (2013) consider a different approach, in the context of pooled annuities, to what we above have called the indifference annuity loading. Instead of the loading being payed in a lump sum at the time of purchase (as we have done), they incorporate a variable fee into their guaranteed product, paid continuously over time. They find that this fee may be structured so the utility of the pooled annuity precisely matches that of the guaranteed product it is being compared to, at all times. They term this fee structure the ``lifetime breakeven costs'', and explore its properties and implications for consumption.

To further compare with our results, let us assume for now that the pooled annuity funds are indeed invested risk-free. Following the argument of Stamos (2008), one would then obtain the optimal \emph{proportion} of the remaining funds that should be paid out at each time, as a function of both time \emph{and the number of individuals remaining in the pool}. In a tontine, the amount withdrawn may vary with time, but not with the number of survivors. In other words, the prospectus for a tontine should lay out a specific dollar amount to be withdrawn in each year of the contract, so any uncertainty in the amount paid to an individual derives simply from the size of the pool with whom the withdrawal is split. A pooled annuity has the latter uncertainty, but in addition has uncertainty about the dollar amount to be withdrawn, which will vary in a path-dependent way according to how the mortality experience of the pool evolves. To know how much you will receive in year 10, it is no longer enough to know how many survivors there are in year 10, but you also need to know how many survived in each of years one through nine. 

There are both advantages and disadvantages to pooled annuities versus tontines. On the plus side, the pooled annuity has an extra variable to control for, so should yield higher utility -- with risk-free investing we expect its utility to be intermediate between that of a tontine and an annuity. As we have seen, the latter are in fact very close, so the utility gain (though real) is actually modest. On the down side, the path-dependent nature of a pooled annuity (or the annuity overlay introduced by Donnelly et. al. (2014)) makes it more complex to explain to participants, and more difficult for those participants to evaluate the risks and uncertainties of their income stream. For these reasons, we (on balance) advocate tontines. The simpler design of tontines certainly makes their calculations less computationally intensive. At a mathematical level, it also allows us to go further in analyzing their properties, which is an important component of the current paper.

The closest other work we can find to our natural tontine payout is that of Sabin (2010) on a fair annuity tontine. He gives a specific interpretation of ``fair'' and analyzes how to adjust tontine payments for heterogeneous ages and initial contributions. A related fairness question for pooled annuities is treated in Donnelly (2015). Our natural tontine is ``fair'' by construction because we don't mix cohorts or ages. 

In sum, although research on tontine schemes is scattered across the insurance, actuarial, economic, and history journals, we have come across few, if any, scholarly articles that condemn or dismiss the tontine concept outright. \footnote{See recent papers on optimal insurance design, such as Bernard et. al. (2015) and references therein, which focus on traditional risks as opposed to retirement income products. Likewise, Dai, Kwok and Zong (2008) examine the optimal utilization of GLWBs, but not on optimal design.}

\section{Conclusion and Relevance}
\label{conc}

It is not widely known that in the year 1790, the first U.S. Secretary of the Treasury, Alexander Hamilton proposed one of the largest tontines in history. To help reduce a crushing national debt -- something that is clearly not a recent phenomenon -- he suggested the U.S. government replace high-interest revolutionary war debt with new bonds in which coupon payments would be made to a group, as opposed to individuals\footnote{Source: Jennings, Swanson and Trout (1988).}. The group members would share the interest payments evenly amongst themselves, provided they were alive. But, once a member of the group died, his or her portion would stay in a pool and be shared amongst the survivors. This process would theoretically continue until the very last survivor would be entitled to the entire interest payment -- potentially millions of dollars. This obscure episode in U.S. history has become known as Hamilton's Tontine Proposal, which he claimed -- in a letter to George Washington -- would reduce the interest paid on U.S. debt, and eventually eliminate it entirely.

Although Congress decided not to act on Hamilton's proposal \footnote{Hamilton left public life in disgrace after admitting to an affair with a married woman, and soon after died in a duel with Aaron Burr, the U.S. vice president at the time} the tontine idea itself never died on American soil. U.S. insurance companies began issuing tontine insurance policies -- which are close cousins to de Tonti's tontine -- to the public in the mid-19th century, and they became wildly popular\footnote{Source: Ransom and Sutch (1987).}. By the start of the 20th century, historians have documented that half of U.S. households owned a tontine insurance policy, which many used to support themselves through retirement. The longer one lived, the greater their payments. This was a personal hedge against longevity, with little risk exposure for the insurance company. Sadly though, due to shenanigans and malfeasance on the part of company executives, the influential New York State Insurance Commission banned tontine insurance in the state, and by 1910 most other states followed. Tontines have been illegal in the U.S. for over a century and most insurance executives have likely never heard of them.

Tontines not only have a fascinating history but have some basis in economic principles, In fact, Adam Smith himself, writing in the \emph{Wealth of Nations}, noted that tontines may be preferred to life annuities. We believe that a strong case can be made for overturning the current ban on tontine insurance -- \emph{allowing both tontine and annuities to co-exist as they did 320 years ago} -- with suitable adjustments to alleviate problems encountered in the early 20th century. Indeed, given the insurance industry's concern for longevity risk capacity, and its poor experience in managing the risk of long-dated fixed guarantees, one can argue that an (optimal) tontine annuity is a ``triple win" proposition for individuals, corporations and governments. See Milevsky (2015) for further arguments along this line.

It is worth noting that under the proposed (EU) Solvency II guidelines for insurer's capital as well as risk management, there is a renewed focus on {\em total balance sheet} risks. In particular, insurers will be required to hold {\em more} capital against market risk, credit risk and operational risk. In fact, a recently-released report Moody's (2013) claims that \emph{solvency ratios will exhibit a more complex volatility under Solvency II than under Solvency I, as both the available capital and the capital requirements will change with market conditions.} According to many commentators this is likely to translate into higher consumer prices for insurance products with long-term maturities and guarantees. And, although this only applies to European companies (at this point time), it is not unreasonable to conclude that in a global market, annuity loadings will increase, making (participating) tontine products relatively more appealing to price-sensitive consumers.

Moreover, perhaps a properly-designed tontine product could help alleviate the low levels of voluntary annuity purchases -- a.k.a. the annuity puzzle -- by gaming the behavioral biases and prejudices exhibited by retirees. The behavioral economics literature and advocates of cumulative prospect theory have argued that consumers make decisions based on more general \emph{value functions} with personalized decision weights. Among other testable hypotheses, this leads to a preference for investments with (highly) skewed outcomes, even when the alternative is a product with the same expected present values\footnote{Source: Barberis (2013).}. Of course, whether the public and regulators can be convinced of these benefits remains to be seen, but a debate would be informative.

We are not alone in this quest. Indeed, during the last decade a number of companies around the world -- egged on by scholars and journalists\footnote{See for example: Chancellor (2001), Goldsticker (2007), Pechter (2007), Chung and Tett (2007), as well as the more scholarly articles by Richter and Weber (2011), Rotemberg (2009) and especially Sabin (2010).} -- have tried to resuscitate the tontine concept while trying to avoid the bans and taints. Although the specific designs differ from proposal to proposal, all share the same idea we described in this paper: Companies act as custodians and guarantee very little. This arrangement requires less capital which then translates into more affordable pricing for the consumer. Once again the main qualitative insight from our model is that a properly-designed tontine could hold its own in the \emph{utility arena}, against an actuarially-fair life annuity and pose a real challenge to a loaded annuity. The next (natural) step would be to examine the robustness of our results in a full-blown stochastic mortality model. 

%\newpage

\newpage

\section{Appendix and proofs}
\label{apen}

\subsection{Basic Properties}
\label{proof1}

The bulk of this section consists of proofs of results from Section \ref{properties}
To simplify notation, we use as few subscripts as possible, and write $\theta(p)=\theta_{n,\gamma}(p)$, $\beta(p)=\beta_\gamma(p)=\beta_{n,\gamma}(p)$, or $D(p)=D^{\text{OT}}_{n,\gamma}(p)$, as long as the meaning is clear. We start with some calculations for the binomial distribution
\begin{lemma}
\label{derivativeofmeans}
$\frac{d}{dp}E[f(N(p))]=\frac{1}{p}E[g(N(p))]$, \quad\text{where $g(k)=(k-1)(f(k)-f(k-1))$.}
\label{meanderivative}
\end{lemma}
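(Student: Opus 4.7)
The plan is to prove this identity by direct computation on the explicit binomial sum, relying only on re-indexing and a standard identity for binomial coefficients. No probabilistic trick seems necessary, since $N(p)$ is supported on the finite set $\{1,\dots,n\}$ and the expectation is simply a polynomial in $p$ that we can differentiate term by term.

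First I would write
$$
E[f(N(p))]=\sum_{k=0}^{n-1}\binom{n-1}{k}p^k(1-p)^{n-1-k}f(k+1),
$$
and differentiate in $p$. This produces two sums: one containing a factor $kp^{k-1}(1-p)^{n-1-k}$, the other containing a factor $-(n-1-k)p^k(1-p)^{n-2-k}$. The $k=0$ term vanishes in the first sum and the $k=n-1$ term vanishes in the second, so both ranges effectively run from $1$ to $n-1$ after reindexing the second sum (replace $k$ by $k-1$ there).

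Next I would align the two sums by using the identity
$$
(n-k)\binom{n-1}{k-1}=k\binom{n-1}{k},
$$
which lets the reindexed second sum be rewritten with the same combinatorial coefficient $k\binom{n-1}{k}$ and the same $p^{k-1}(1-p)^{n-1-k}$ as the first sum, only with $f(k)$ in place of $f(k+1)$. Subtracting term by term gives
$$
\frac{d}{dp}E[f(N(p))]=\sum_{k=1}^{n-1}\binom{n-1}{k}kp^{k-1}(1-p)^{n-1-k}\bigl(f(k+1)-f(k)\bigr).
$$

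Finally I would factor out $1/p$ and recognize the remaining sum as $E[g(N(p))]$, with $g(k)=(k-1)(f(k)-f(k-1))$: the $k$-th summand inside the expectation corresponds to the event $N(p)=k+1$, contributing $g(k+1)=k(f(k+1)-f(k))$, exactly as required. The only real obstacle is keeping the indexing straight between $N(p)$ and $N(p)-1$ and verifying the boundary terms vanish; once that bookkeeping is done, the identity falls out immediately.
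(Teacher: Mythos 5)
Your proof is correct and follows essentially the same route as the paper's: differentiate the binomial sum term by term, reindex the second sum, use the identity $(n-k)\binom{n-1}{k-1}=k\binom{n-1}{k}$ to align the coefficients, and recognize the result as $\frac{1}{p}E[g(N(p))]$ (noting $g(1)=0$ so the $k=0$ term is harmless). No gaps.
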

\begin{proof}
\begin{align*}
\frac{d}{dp}E[f(N(p))]
&=\sum_{k=0}^{n-1}\binom{n-1}{k}f(k+1)\Big[kp^{k-1}(1-p)^{n-k-1}-(n-k-1)p^k(1-p)^{n-k-2}\Big]\\
&=\sum_{k=1}^{n-1}\binom{n-1}{k}p^{k-1}(1-p)^{n-k-1}kf(k+1)\\
&\qquad\qquad - \sum_{k=0}^{n-2}\binom{n-1}{k+1}p^{k}(1-p)^{n-k-2}(k+1)f(k+1)\\
&=\frac{1}{p}\sum_{k=1}^{n-1}\binom{n-1}{k}p^k(1-p)^{n-k-1}k[f(k+1)-f(k)].
\end{align*}
\end{proof}

\begin{lemma}
$E[\frac{n}{N(p)}]=\frac{1-(1-p)^n}{p}<\frac{1}{p}$. 
\label{reciprocalbound}
\end{lemma}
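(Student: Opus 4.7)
The plan is to compute the expectation directly by a combinatorial identity, which turns the sum into a binomial expansion missing only its constant term. The identity I will exploit is
$$
\frac{n}{k+1}\binom{n-1}{k}=\binom{n}{k+1},
$$
which follows immediately from writing out the factorials. Using $N(p)-1\sim\text{Bin}(n-1,p)$, we have
$$
E\Big[\frac{n}{N(p)}\Big]=\sum_{k=0}^{n-1}\binom{n-1}{k}p^k(1-p)^{n-1-k}\frac{n}{k+1}.
$$
First I would substitute the identity above to replace $\frac{n}{k+1}\binom{n-1}{k}$ by $\binom{n}{k+1}$, then multiply and divide by $p$ so that each term carries the factor $p^{k+1}(1-p)^{n-1-k}$.

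Next I would re-index the sum by setting $j=k+1$, obtaining
$$
E\Big[\frac{n}{N(p)}\Big]=\frac{1}{p}\sum_{j=1}^{n}\binom{n}{j}p^j(1-p)^{n-j}.
$$
The sum differs from the full binomial expansion of $(p+(1-p))^n=1$ only by the missing $j=0$ term, which equals $(1-p)^n$. Therefore the sum equals $1-(1-p)^n$, giving the stated formula $E[n/N(p)]=(1-(1-p)^n)/p$.

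Finally, for the strict inequality, I note that $(1-p)^n>0$ for $p\in(0,1)$, so $1-(1-p)^n<1$ and dividing by $p>0$ gives $E[n/N(p)]<1/p$. The endpoint $p=1$ is irrelevant here since $\beta_{n,\gamma}(p)$ is being studied on $(0,1)$, but the identity itself continues to hold there (both sides equal $1$). There is no real obstacle: the whole argument is a standard binomial-identity manipulation, and the only step that requires care is re-indexing consistently so that the leftover term is precisely $(1-p)^n$.
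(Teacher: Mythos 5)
Your proposal is correct and follows exactly the same route as the paper's own proof: the identity $\frac{n}{k+1}\binom{n-1}{k}=\binom{n}{k+1}$, the re-indexing to $j=k+1$, and the recognition that the resulting sum is the binomial expansion of $1$ missing its $j=0$ term $(1-p)^n$. No differences worth noting.
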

\begin{proof}
\begin{align*}
E\Big[\frac{n}{N(p)}\Big]&=\sum_{k=0}^{n-1}\binom{n-1}{k}p^k(1-p)^{n-1-k}\frac{n}{k+1}\\
&=\frac{1}{p}\sum_{k=0}^{n-1}\binom{n}{k+1}p^{k+1}(1-p)^{n-(k+1)}
=\frac{1}{p}[1-(1-p)^n]<\frac{1}{p}.
\end{align*}
\end{proof}

\begin{proof}[Proof of Lemma \ref{betabound}]
Parts (b) and (c) are elementary calculations. To see (a), apply Lemma \ref{derivativeofmeans} to obtain that
\begin{multline*}
\beta'(p)=\frac{d}{dp}E[(N/n)^{\gamma-1}]=\frac{1}{n^{\gamma-1}}\Big(E[N^{\gamma-1}]+p\cdot\frac{1}{p}E[(N-1)(N^{\gamma-1}-(N-1)^{\gamma-1})]\Big)\\
=\frac{1}{n^{\gamma-1}}E[N^\gamma-(N-1)^\gamma]>0.
\end{multline*}
The first statement of (d) holds, because $E[(N/n)^{\gamma-1}]\to P(N=n)=p^{n-1}$. The second part of (d) follows from Lemma \ref{reciprocalbound}. 

Now define 
$$
R_\gamma(p)=\frac{\beta_\gamma(p)^{1/\gamma}}{p},
$$ 
so $R_1(p)=1$. Consider the following:
\begin{lemma} $R_\gamma(p)$ is increasing in $\gamma$. It is increasing in $p$ if $0<\gamma<1$, and is decreasing in $p$ if $\gamma>1$. 
\label{Rlemma}
\end{lemma}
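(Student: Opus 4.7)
The plan is to dispatch the two monotonicity statements separately, with the $\gamma$-monotonicity requiring the deeper idea.

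For monotonicity in $p$, I would start from $R_\gamma(p)^\gamma = p^{1-\gamma}\theta_{n,\gamma}(p)$, differentiate in $p$, and apply Lemma \ref{derivativeofmeans} with $f(k)=(n/k)^{1-\gamma}$ to rewrite $\theta'_{n,\gamma}(p)$ as an expectation over $N(p)$. After algebraic simplification, $\partial_p(R_\gamma^\gamma)$ should collapse to a positive multiple of $E[h_\gamma(N(p))]$, where $h_\gamma(x) = x^\gamma - (x-1)^\gamma - \gamma x^{\gamma-1}$. The sign of $h_\gamma$ on integers $x \ge 1$ is transparent from the representation $h_\gamma(x) = \int_{x-1}^x \gamma(t^{\gamma-1} - x^{\gamma-1})\,dt$: the integrand is strictly negative throughout when $\gamma > 1$ (since $t \mapsto t^{\gamma-1}$ is then increasing) and strictly positive when $0 < \gamma < 1$. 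This immediately gives the claimed sign for $\partial_p R_\gamma(p)$.

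For monotonicity in $\gamma$, the key step is to rewrite $\gamma \log R_\gamma(p) = \log E[W^{1-\gamma}]$ with $W = W(p) = pn/N(p)$, and to introduce $m(s) = \log E[W^s]$ --- the cumulant generating function of $\log W$. Under the substitution $s = 1 - \gamma$ this becomes $\log R_\gamma = m(s)/(1-s)$, and since $\gamma$ is increasing iff $s$ is decreasing, it suffices to show $s \mapsto m(s)/(1-s)$ is strictly decreasing on $s < 1$. Differentiation yields a derivative whose numerator equals $(1-s)m'(s) + m(s)$, and the tangent line inequality from convexity of $m$ at the base point $s=1$ bounds this numerator above by $m(1)$.

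The crux of the argument, and the place where the previous structure of the paper comes in, is to show $m(1) < 0$. Here Lemma \ref{reciprocalbound} does exactly the required work: it yields $E[W] = p E[n/N] = 1 - (1-p)^n$, strictly less than $1$ for $p \in (0,1)$, so $m(1) = \log E[W] < 0$. Combined with the convexity bound, this forces $\frac{d}{ds}[m(s)/(1-s)] < 0$ for all $s < 1$, completing the proof. Of the two parts, $\gamma$-monotonicity is the conceptually subtler one: spotting the cumulant-generating-function representation and pairing it with the $E[W] < 1$ estimate is what I would expect to require the most thought, whereas the $p$-monotonicity reduces to a direct calculation whose sign is controlled by a straightforward integral.
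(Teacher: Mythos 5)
Your proof is correct and follows essentially the same route as the paper: the $p$-monotonicity reduces, exactly as in the paper's proof, to the sign of $E\big[N^\gamma-(N-1)^\gamma-\gamma N^{\gamma-1}\big]$ (your integral representation of $h_\gamma$ replaces the paper's Taylor-remainder argument and is, if anything, slightly cleaner at $N=1$), and the $\gamma$-monotonicity rests on the same key input, namely that Lemma \ref{reciprocalbound} makes the reweighting by $W=pn/N$ a sub-probability. The only cosmetic difference is that the paper executes the moment interpolation by applying H\"older's inequality directly under that sub-probability measure, whereas you recast it as convexity of the cumulant generating function $s\mapsto\log E[W^s]$ together with a tangent-line bound anchored at $s=1$ --- two standard and equivalent packagings of the same inequality.
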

Once this is established, the monotonicity of $R_\gamma$ in $\gamma$ immediately implies (e), completing the proof of Lemma \ref{betabound}. It is worth noting that for $2\le\gamma$ there is a simpler proof of (e), since
$$
\theta(p)=E\Big[\Big(\frac{N(p)}{n}\Big)^{\gamma-1}\Big]\ge E\Big[\frac{N(p)}{n}\Big]^{\gamma-1}=\Big(\frac{1+(n-1)p}{n}\Big)^{\gamma-1}=\Big(p+\frac{1-p}{n}\Big)^{\gamma-1}
>p^{\gamma-1}
$$ 
by Jensen, implying $\beta(p)=p\theta(p)>p^\gamma$. 
\end{proof}

\begin{proof}[Proof of Lemma \ref{Rlemma}]
Let $0<\gamma_2<\gamma_1$. Define $Q$ by $\frac{dQ}{dP}=\frac{pn}{N}$. By Lemma \ref{reciprocalbound}, $Q$ is a sub-probability. Set $b=\frac{\gamma_1}{\gamma_2}>1$. By Holder, 
\begin{multline*}
\beta_{\gamma_2}(p)=pE\Big[\big(\frac{N(p)}{n}\big)^{\gamma_2-1}\Big]
=E_Q\Big[\big(\frac{N(p)}{n}\big)^{\gamma_2}\Big]
\le E_Q\Big[\big(\frac{N(p)}{n}\big)^{\gamma_2 b}\Big]^{\frac{1}{b}}E_Q[1]^{1-\frac{1}{b}}\\
< E_Q\Big[\big(\frac{N(p)}{n}\big)^{\gamma_1}\Big]^{\gamma_2/\gamma_1}
=\Big(pE\Big[\big(\frac{N(p)}{n}\big)^{\gamma_1-1}\Big]\Big)^{\gamma_2/\gamma_1}
=\beta_{\gamma_1}(p)^{\gamma_2/\gamma_1}.
\end{multline*}
Taking a $1/\gamma_2$ power and dividing by $p$ now shows that $R_{\gamma_2}(p)\le R_{\gamma_1}(p)$. 

To show the monotonicity properties of $R_\gamma$ with respect to $p$, it suffices to show the same for $\log R_\gamma$. As in the proof of (a) of Lemma \ref{betabound}, 
\begin{multline*}
\frac{d}{dp} \log R_\gamma(p) = \frac{\beta'(p)}{\gamma\beta(p)}-\frac{1}{p}
=\frac{1}{\gamma \beta(p)}\Big[\beta'(p)-\frac{\gamma \beta(p)}{p}\Big]\\
=\frac{1}{\gamma\beta(p)n^{\gamma-1}}E\Big[N(p)^\gamma-(N(p)-1)^\gamma-\gamma N(p)^{\gamma-1}\Big].
\end{multline*}
By Taylor's theorem, $k^\gamma-(k-1)^\gamma-\gamma k^{\gamma-1}=-\frac{\gamma(\gamma-1)}{2}\xi^{\gamma-2}$ for some $\xi$. This is $<0$ when $\gamma>1$ and $>0$ if $0<\gamma<1$, which completes the proof.  
\end{proof}

Turning to conjecture \eqref{depletioninequality}, define
$$
R_{\gamma_1,\gamma_2}(p)=\frac{\beta_{\gamma_1}(p)^{1/\gamma_1}}{\beta_{\gamma_2}(p)^{1/\gamma_2}},
$$
so $R_\gamma(p)=R_{\gamma,1}(p)$ and $R_{\gamma_1,\gamma_2}(1)=1$. For $Q$ as in the proof of Lemma \ref{Rlemma}, an application of Holder's inequality (as in that proof) shows that 
\begin{equation}
R_{\gamma_1,\gamma_2}(p)=E_Q[(N(p)/n)^{\gamma_1}]^{1/\gamma_1}/E_Q[(N(p)/n)^{\gamma_2}]^{1/\gamma_2}\ge 1,
\label{Rbiggerthanone}
\end{equation} 
when $\gamma_1>\gamma_2$.
\begin{lemma}
Assume $R_{\gamma_1,\gamma_2}(p)$ decreases with $p$, for some given $n$ and $\gamma_1>\gamma_2$. 
Then \eqref{depletioninequality} also holds. In other words, $\Delta_{n,\gamma_1}(t)<\Delta_{n,\gamma_2}(t)$ for every $t>0$. 
\label{Rmonotonicity}
\end{lemma}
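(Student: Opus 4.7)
The plan is to reduce the claim to a classical monotone-average (or monotone-likelihood-ratio) inequality on a finite base measure, by using Corollary \ref{CRRAoptimum} to recognise $\Delta_{n,\gamma_i}(t)$ as a cumulative distribution function.

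First I would use Corollary \ref{CRRAoptimum} and the budget constraint \eqref{tontineconstraint} to write, for $i=1,2$,
$$
\Delta_{n,\gamma_i}(t)=\frac{\int_0^t e^{-rs}\beta_{n,\gamma_i}({}_sp_x)^{1/\gamma_i}\,ds}{\int_0^\infty e^{-rs}\beta_{n,\gamma_i}({}_sp_x)^{1/\gamma_i}\,ds},
$$
so that the overall normalizing constants $D^{\text{OT}}_{n,\gamma_i}(1)$ drop out and both $\Delta_{n,\gamma_i}(\cdot)$ are genuine CDFs of probability measures on $[0,\infty)$. Introducing the auxiliary finite measure $d\mu(s)=e^{-rs}\beta_{n,\gamma_2}({}_sp_x)^{1/\gamma_2}\,ds$ and the function $h(s)=R_{\gamma_1,\gamma_2}({}_sp_x)$, these ratios take the form
$$
\Delta_{n,\gamma_1}(t)=\frac{\int_0^t h\,d\mu}{\int_0^\infty h\,d\mu},\qquad \Delta_{n,\gamma_2}(t)=\frac{\mu([0,t])}{\mu([0,\infty))}.
$$

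Next I would establish monotonicity of $h$. The survival function ${}_sp_x=\exp(-\int_0^s\mu_{x+u}\,du)$ is non-increasing in $s$, and by hypothesis $R_{\gamma_1,\gamma_2}$ is decreasing in $p$, so the composition $h$ is non-decreasing in $s$. Moreover $h(0)=R_{\gamma_1,\gamma_2}(1)=1$ while $R_{\gamma_1,\gamma_2}(p)>1$ for $p<1$ (by the strict Hölder estimate \eqref{Rbiggerthanone}), so $h$ is strictly non-constant on any interval of positive $\mu$-measure inside $(0,\infty)$.

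The result then reduces to the elementary weighted-average fact: if $h$ is non-decreasing on $[0,\infty)$, then
$$
\frac{\int_0^t h\,d\mu}{\mu([0,t])}\le\frac{\int_0^\infty h\,d\mu}{\mu([0,\infty))},
$$
with strict inequality whenever $h$ is not $\mu$-a.e.\ constant. Cross-multiplying is exactly $\Delta_{n,\gamma_1}(t)<\Delta_{n,\gamma_2}(t)$, which is \eqref{depletioninequality}. The averaging inequality itself is essentially a one-line Chebyshev-type computation: split $\int_0^\infty h\,d\mu$ at $s=t$ and use $h(s_1)\le h(s_2)$ for every $s_1\in[0,t]$ and $s_2\in(t,\infty)$. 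The only even mildly delicate point that I expect is the strict (rather than weak) inequality, which is why I would flag the non-constancy of $h$ coming from \eqref{Rbiggerthanone} explicitly before invoking the averaging step.
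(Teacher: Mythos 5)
Your proof is correct, and it reaches the conclusion by a slightly different route than the paper. You normalize away the constants $D^{\text{OT}}_{n,\gamma_i}(1)$, view $\Delta_{n,\gamma_1}$ and $\Delta_{n,\gamma_2}$ as the CDFs of the measures $h\,d\mu$ and $d\mu$ with $h(s)=R_{\gamma_1,\gamma_2}({}_sp_x)$ non-decreasing, and invoke the Chebyshev-type fact that the running $\mu$-average of a non-decreasing $h$ over $[0,t]$ is at most its total average (equivalently: a non-decreasing likelihood ratio yields stochastic dominance). The paper instead writes $D_{\gamma_1}(p)=C\,D_{\gamma_2}(p)R_{\gamma_1,\gamma_2}(p)$, notes that $\Delta_{n,\gamma_1}(t)-\Delta_{n,\gamma_2}(t)=\int_0^t e^{-rs}D_{\gamma_2}({}_sp_x)[CR_{\gamma_1,\gamma_2}({}_sp_x)-1]\,ds$ vanishes at $t=0$ and as $t\to\infty$, and argues that the integrand changes sign exactly once (negative then positive, using $R\ge 1$ to get $0<C<1$), so the difference is unimodal with zero boundary values and hence negative. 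These are two packagings of the same single-crossing phenomenon; yours avoids introducing $C$ and the limiting argument at infinity, and reduces the step to a standard averaging inequality, while the paper's sign-change argument has the side benefit of directly yielding part (c) of Theorem \ref{increasinggamma} (the initial-interval statement) without the monotonicity hypothesis. One small point to tighten: for the \emph{strict} inequality you need $h$ to be non-constant $\mu$-a.e., which requires the strict version $R_{\gamma_1,\gamma_2}(p)>1$ for $0<p<1$; the display \eqref{Rbiggerthanone} states only $\ge$, but the strict form does follow from the H\"older step there because $E_Q[1]=1-(1-p)^n<1$, so your flagging of this is apt and easily justified.
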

\begin{proof}
There is a constant $C$ [depending on $n$, $\gamma_1$, $\gamma_2$, and the mortality parameters] such that $D_{\gamma_1}(p)=C D_{\gamma_2}(p)R_{\gamma_1,\gamma_2}(p)$ for each $p$. Therefore 
$$
\Delta_{n,\gamma_1}(t)-\Delta_{n,\gamma_2}(t)=\int_0^t e^{-rs}D_{\gamma_2}({}_sp_x)\Big[CR_{\gamma_1,\gamma_2}({}_sp_x)-1\Big]\,ds.
$$
By \eqref{tontineconstraint}, this $\to 1-1=0$ in the limit as $t\to\infty$, so the integrand must take both positive and negative values. By \eqref{Rbiggerthanone} we have $R_{\gamma_1,\gamma_2)}({}_sp_x)\ge 1$ for each $s$, so $0<C<1$. Therefore $CR_{\gamma_1,\gamma_2}({}_sp_x)-1$ starts out negative, and then increases with $s$, eventually becoming positive.  By hypothesis, once it becomes positive, it stays so. There is thus a $t_0$ such that the integrand is negative for $s<t_0$ and positive for $s>t_0$. In other words, $\Delta_{n,\gamma_1}(t)-\Delta_{n,\gamma_2}(t)$ is decreasing on $[0,t_0]$ and increasing on $[t_0,\infty)$. Since $\Delta_{n,\gamma_1}(0)-\Delta_{n,\gamma_2}(0)=0$ and $\Delta_{n,\gamma_1}(t)-\Delta_{n,\gamma_2}(t)\to0$ when $t\to\infty$, we see that $\Delta_{n,\gamma_1}(t)-\Delta_{n,\gamma_2}(t)<0$ for every $t>0$, as required.
\end{proof}

\begin{proof}[Proof of Theorem \ref{increasinggamma}]
For part (a), let $\gamma_1\ge 1\ge \gamma_2>0$, with $\gamma_1\neq \gamma_2$. Then $R_{\gamma_1,\gamma_2}(p)=\frac{R_{\gamma_1}(p)}{R_{\gamma_2}(p)}$ decreases with $p$, by the second statement of Lemma \ref{Rlemma}. Now apply Lemma \ref{Rmonotonicity}

To show part (b) it suffices to show that $R_{\infty,\gamma_2}(p)$ is decreasing in $p$. This follows from (a) and (d) of Lemma \ref{betabound}, since $R_{\infty,\gamma_2}(p)=1/\beta_{\gamma_2}(p)^{1/\gamma_2}$. 
Part (c) is implicit in the proof of Lemma \ref{Rmonotonicity}, since the hypothesis of that result is not used in showing that $\Delta_{n,\gamma_1}(t)-\Delta_{n,\gamma_2}(t)$ initially decreases.
\end{proof}

Numerical evidence (not reported here) suggests that there is a $\gamma_0$ ($\approx 7$) such that $R_{\gamma_1,\gamma_2}(p)$ is decreasing in $p$ whenever $\gamma_0>\gamma_1>\gamma_2$. By Lemma \ref{Rmonotonicity}, this would imply  
\eqref{depletioninequality} for these parameters, without any restriction on the underlying mortality distribution. However, for larger $\gamma$ values, $R_{\gamma_1,\gamma_2}(p)$ can in fact increase over a small range of $p$'s. For example, with $n=2$, $\gamma_1=10$ and $\gamma_2=9$, it increases over the range $p\in[0.00098, 0.00394]$. 
% or [0.000972, .003946]
This defect in $R$ for very small $p$ does not appear significant enough to cause \eqref{depletioninequality} to fail under any reasonable mortality distribution. But it is in fact possible to construct a pathological mortality distribution for which \eqref{depletioninequality} fails, by making ${}_tp_x$ spend essentially all of its time in the above range of $p$'s. However, this distribution's hazard rate will not be monotone, since we achieve it by taking the hazard rate low over the above range of $p$'s, and high elsewhere. 

Asymptotics in $n$ are problematic, because of the need to estimate $\beta(p)$ for $p$ very small. We circumvent this difficulty by capping tontine payments at some advanced age $x+T$. For a tontine running $T$ years, the optimal payout $\tilde d(t)$ will maximize $\int_0^T e^{-rs}{}_sp_x \frac{\tilde d(s)^{1-\gamma}}{1-\gamma}\theta_\gamma({}_sp_x)\,ds$ over $\tilde d$ satisfying the constraint $\int_0^Te^{-rs}\tilde d(s)\,ds=1$. The arguments of Corollary \ref{CRRAoptimum} now show that 
$$
\tilde d(t)=\frac{\beta_\gamma({}_tp_x)^{1/\gamma}}{\int_0^Te^{-rs}\beta_\gamma({}_sp_x)^{1/\gamma}\,ds}.
$$
We now set $\tilde \Delta_{n,\gamma}(t)=\int_0^t e^{-rs}\tilde d(s)\,ds$, and can prove the following asymptotic version of \eqref{depletioninequality}.
\begin{theorem}
\label{asymptotictheorem}
Let $\gamma_1>\gamma_2$. For every $\epsilon>0$ there is an $n_0$ such that $\tilde \Delta_{n,\gamma_1}(t)<\tilde \Delta_{n,\gamma_2}(t)$ for each $t\in[\epsilon,T-\epsilon]$ and $n\ge n_0$.
\end{theorem}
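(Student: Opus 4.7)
The plan is to expand everything in powers of $1/n$ around the $n\to\infty$ limit, show that the leading correction to $\tilde{\Delta}_{n,\gamma}(t)$ is strictly monotone in $\gamma$ in the right direction, and use compactness on $[\epsilon,T-\epsilon]$ to absorb the higher-order error.

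The first step is to establish a uniform expansion of $\beta_{n,\gamma}(p)^{1/\gamma}$. Writing $X=N(p)/n$ and Taylor expanding $f(x)=x^{\gamma-1}$ around $p$ using the known moments of the binomial ($E[X-p]=(1-p)/n$, $E[(X-p)^2]=p(1-p)/n+O(1/n^2)$, and $E[(X-p)^k]=O(1/n^2)$ for $k\ge 3$), a routine calculation gives
\begin{equation*}
\beta_{n,\gamma}(p)=p^\gamma+\frac{\gamma(\gamma-1)(1-p)p^{\gamma-1}}{2n}+O(1/n^2), \quad\text{so}\quad \beta_{n,\gamma}(p)^{1/\gamma}=p+\frac{(\gamma-1)(1-p)}{2n}+O(1/n^2).
\end{equation*}
Since the derivatives of $p^{\gamma-1}$ involve negative powers of $p$, the error bound is uniform only on compact subsets of $(0,1]$; this is why the cap at $T$ is essential, as it keeps ${}_sp_x \ge {}_Tp_x > 0$ for $s\in[0,T]$.

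Set $A(t)=\int_0^t e^{-rs}{}_sp_x\,ds$ and $B(t)=\int_0^te^{-rs}(1-{}_sp_x)\,ds$. Substituting the expansion into the definition of $\tilde{d}(t)$ and dividing integrals yields, after expanding $(1+x)^{-1}$,
\begin{equation*}
\tilde{\Delta}_{n,\gamma}(t)=\frac{A(t)}{A(T)}+\frac{\gamma-1}{2n}\cdot\frac{A(t)}{A(T)}\Bigl(\frac{B(t)}{A(t)}-\frac{B(T)}{A(T)}\Bigr)+O(1/n^2),
\end{equation*}
uniformly for $t\in[0,T]$. Subtracting two such expansions at $\gamma_1$ and $\gamma_2$ eliminates the $\gamma$-independent leading term, and I would obtain
\begin{equation*}
\tilde{\Delta}_{n,\gamma_1}(t)-\tilde{\Delta}_{n,\gamma_2}(t)=\frac{\gamma_1-\gamma_2}{2n}\cdot\frac{A(t)}{A(T)}\Bigl(\frac{B(t)}{A(t)}-\frac{B(T)}{A(T)}\Bigr)+O(1/n^2).
\end{equation*}

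The next step is to show $s\mapsto B(s)/A(s)$ is strictly increasing. A direct differentiation gives $(B/A)'(s)=e^{-rs}A(s)^{-2}\int_0^s e^{-ru}({}_up_x-{}_sp_x)\,du$, which is strictly positive since ${}_up_x>{}_sp_x$ for $u<s$. Therefore $B(t)/A(t)<B(T)/A(T)$ for all $t<T$, the bracketed factor above is strictly negative, and for $\gamma_1>\gamma_2$ the leading coefficient in $1/n$ is strictly negative. On the compact set $t\in[\epsilon,T-\epsilon]$, both $A(t)/A(T)$ and $B(T)/A(T)-B(t)/A(t)$ are continuous and strictly positive, hence bounded below by some constant $c_0>0$. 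So the leading term is at most $-c_0(\gamma_1-\gamma_2)/(2n)$, which dominates the $O(1/n^2)$ error once $n\ge n_0$ for some $n_0$ depending on $\epsilon$, $\gamma_1$, $\gamma_2$ and the mortality parameters, yielding the claimed strict inequality.

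The main obstacle I expect is making the Taylor expansion of $\beta_{n,\gamma}(p)^{1/\gamma}$ genuinely uniform with an $O(1/n^2)$ remainder on $p\in[{}_Tp_x,1]$. This requires bounding fourth and higher central moments of $N(p)/n$ together with the first four derivatives of $p^{\gamma-1}$ on this interval, and then tracking the $1/n^2$ error through the subsequent algebraic manipulations (ratios of integrals, expansion of $(1+x)^{-1}$). Once this uniform remainder is in hand, the rest is a straightforward compactness argument.
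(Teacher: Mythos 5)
Your overall strategy is the same as the paper's: expand $\beta_{n,\gamma}(p)^{1/\gamma}$ to first order in $1/n$, observe that the $O(1/n)$ correction carries a factor $(\gamma-1)$, and reduce the claim to the positivity of $F(t)G(T)-G(t)F(T)=\int_0^t\int_t^T e^{-r(s+q)}({}_sp_x-{}_qp_x)\,dq\,ds$ on $(0,T)$ together with a compactness argument on $[\epsilon,T-\epsilon]$. (Your monotonicity of $B/A$ is exactly this determinant identity in disguise; the only cosmetic difference is that you divide the two integrals and expand $(1+x)^{-1}$, whereas the paper cross-multiplies to avoid division. Both are fine.)

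The genuine gap is in the step you label a ``routine calculation,'' which is in fact where essentially all of the work in the paper's proof lives. A Taylor expansion of $f(x)=x^{\gamma-1}$ about $p$ with a Lagrange remainder evaluates $f^{(k)}$ at a random intermediate point $\xi$ lying between $p$ and $N(p)/n$, and $N(p)/n$ can be as small as $1/n$; since $f^{(k)}(\xi)\sim\xi^{\gamma-1-k}$ blows up near $0$, keeping $p\ge{}_Tp_x$ does \emph{not} control the remainder, so your proposed fix (``bounding \dots the first four derivatives of $p^{\gamma-1}$ on this interval'') misdiagnoses the difficulty --- the relevant argument of the derivative is $\xi$, not $p$. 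What is actually needed is a truncation: split the expectation over the event $\{|N(p)/n-\mu|\le\eta p\}$ and its complement, show the complement contributes an exponentially small (hence $o(1/n)$) amount via a concentration inequality (the paper uses Azuma's inequality), and apply Taylor's theorem only on the good event, where $\xi\ge(1-\eta)\mu$ is bounded away from $0$ uniformly for $p\ge{}_Tp_x$. With that in place the paper settles for a remainder of order $n^{-3/2}$ (from the third absolute central moment) rather than your claimed $O(1/n^2)$, but any $o(1/n)$ bound suffices for the final comparison, so the rest of your argument goes through once this step is repaired.
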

\begin{proof} We start by establishing a number of estimates. Fix $\gamma>0$, $1>\eta>0$, $1>p_1>0$, and $C_1>0$. We may then find $n_1$ such that if $n\ge n_1$ and $1\ge p\ge p_1$ then $2(1\lor n^{1-\gamma})e^{-\frac12(n-1)\eta^2p^2}\le\frac{C_1}{n^{3/2}}$.
Recall Azuma's inequality (see for example Steele (1997)), in the form that if $B\sim \text{Bin}(k,p)$ then $P(|\frac{B}{k}-p|>b)\le 2e^{-\frac12kb^2}$. Observe that $\frac1n\le \frac{N(p)}{n}\le 1$, and set $k=n-1$, $B=N(p)-1$, and $\mu=p+\frac{1-p}{n}$. It follows that if $n\ge n_1$ and $1\ge p\ge p_1$ then 
\begin{multline}
E\Big[\big|\frac{N(p)}{n}\big|^{\gamma-1}, \big|\frac{N(p)}{n}-\mu\big|>\eta p\Big]
\le(1\lor n^{1-\gamma})P\Big(\big|\frac{B-kp}{n}\big|>\eta p\Big)\\
\le(1\lor n^{1-\gamma})P\Big(\big|\frac{B}{k}-p\big|>\eta p\Big)
\le \frac{C_1}{n^{3/2}}.
\label{firstbound}
\end{multline}
Likewise, if $n\ge n_1$ and $1\ge p\ge p_1$ then 
$E[|\frac{N(p)}{n}-\mu|^a, |\frac{N(p)}{n}-\mu|>\eta p]\le \frac{C_1}{n^{3/2}}$ for each $a\ge 0$, so 
\begin{equation}
E\Big[\Big(\frac{N(p)}{n}-\mu\Big)^a, \big|\frac{N(p)}{n}-\mu\big|\le \eta p\Big]=
\begin{cases} 
1+e_0, & a=0\\
e_1, &a=1\\
\frac{kp(1-p)}{n^2}+e_2, &a=2
\end{cases}
\label{secondbound}
\end{equation}
where in each case $|e_i|\le \frac{C_1}{n^{3/2}}$.

By Azuma's inequality again, 
\begin{align}
E\Big[\big|\frac{N(p)}{n}-\mu\big|^3\Big]
&= \int_0^\infty P\Big(\big|\frac{N(p)}{n}-\mu\big|^3>q\Big)\,dq
= \int_0^\infty P\Big(\big|\frac{B-kp}{n}\big|>q^{1/3}\Big)\,dq\nonumber\\
&\le \int_0^\infty P\Big(\big|\frac{B}{k}-p\big|>q^{1/3}\Big)\,dq
\le 2\int_0^\infty e^{-\frac12kq^{2/3}}\,dq \label{thirdbound}\\
&=\frac{6}{k^{3/2}}\int_0^\infty e^{-\frac12 z^2}z^2\,dz
=\frac{3\sqrt{2\pi}}{k^{3/2}}.\nonumber
\end{align}
By Taylor's theorem, 
\begin{multline*}
E\Big[\Big(\frac{N}{n}\Big)^{\gamma-1}, \big|\frac{N}{n}-\mu\big|\le\eta p\Big]=E\Big[\mu^{\gamma-1}+(\gamma-1)\mu^{\gamma-2}\Big(\frac{N}{n}-\mu\Big)
+\frac{(\gamma-1)(\gamma-2)}{2}\mu^{\gamma-3}\Big(\frac{N}{n}-\mu\Big)^2\\
+\frac{(\gamma-1)(\gamma-2)(\gamma-3)}{6}\xi^{\gamma-4}\Big(\frac{N}{n}-\mu\Big)^3, \big|\frac{N}{n}-\mu\big|\le\eta p\Big]
\end{multline*}
where $\xi$ lies between $(1-\eta)\mu$ and $(1+\eta)\mu$. Applying \eqref{firstbound}, \eqref{secondbound}, and \eqref{thirdbound} yields a constant $C_2$ such that 
if $n\ge n_1$ and $1\ge p\ge p_1$ then 
$$
E\Big[\Big(\frac{N(p)}{n}\Big)^{\gamma-1}\Big]
=\mu^{\gamma-1}+\frac{(\gamma-1)(\gamma-2)}{2}\mu^{\gamma-3}\frac{kp(1-p)}{n^2}+e_4,
$$
where $|e_4|\le \frac{C_2}{n^{3/2}}$.
% explicit bound: $\le \frac{1}{n^{3/2}}\Big[C_1(1+\mu^{\gamma-1}+|\gamma-1|\mu^{\gamma-2}+\frac{|\gamma-1|\cdot|\gamma-2|}{2}\mu^{\gamma-3})+3\sqrt{2\pi}\frac{|\gamma-1|\cdot|\gamma-2|\cdot|\gamma-3|}{6}\mu^{\gamma-3}p^{\gamma-4}\max((1-\eta)^{\gamma-4}, (1+\eta)^{\gamma-4})$.
Since $\mu=p+\frac{1-p}{n}$ it is now immediate that for these $n$'s and $p$'s, also
$$
E\Big[\Big(\frac{N(p)}{n}\Big)^{\gamma-1}\Big]
=p^{\gamma-1}+\frac{\gamma(\gamma-1)(1-p)}{2n}p^{\gamma-2}+e_5=p^{\gamma-1}\Big(1+\frac{\gamma(\gamma-1)(1-p)}{2np}+\frac{e_5}{p^{\gamma-1}}\Big)
$$
where $|e_5|\le \frac{C_3}{n^{3/2}}$ for some constant $C_3$. By passing to some $n_2\ge n_1$ if necessary, we obtain finally that
\begin{multline}
\beta_\gamma(p)^{1/\gamma}=\Big(pE\Big[\big(\frac{N(p)}{n}\big)^{\gamma-1}\Big]\Big)^{1/\gamma}\\
=p\Big(1+\frac{(\gamma-1)(1-p)}{2np}+e_6\Big)
=p\Big(1-\frac{(\gamma-1)}{2n}\Big)+\frac{\gamma-1}{2n}+pe_6
\label{fourthbound}
\end{multline}
for $n\ge n_2$ and $1\ge p\ge p_1$, where $|e_6|\le\frac{C_4}{n^{3/2}}$, and $C_4$ is a constant depending on $\gamma$ and $p_1$. 

To apply this, fix $\gamma_1>\gamma_2$, set $p_1={}_{T}p_x$, and choose $n_2$ and $C_4$ to ensure that \eqref{fourthbound} holds for both $\gamma_1$ and $\gamma_2$. 
For any $0<t<T$, the inequality $\tilde \Delta_{n,\gamma_1}(t)<\tilde \Delta_{n,\gamma_2}(t)$ is equivalent to the following:
\begin{multline}
\int_0^t e^{-rs}\beta_{\gamma_1}({}_sp_x)^{1/\gamma_1}\,ds\times \int_0^T e^{-rq}\beta_{\gamma_2}({}_qp_x)^{1/\gamma_2}\,dq\\
< \int_0^t e^{-rs}\beta_{\gamma_2}({}_sp_x)^{1/\gamma_2}\,ds\times \int_0^T e^{-rq}\beta_{\gamma_1}({}_qp_x)^{1/\gamma_1}\,dq.
\label{tinequality}
\end{multline}
Set $F(t)=\int_0^t e^{-rs}{}_sp_x\,ds$ and $G(t)=\int_0^t e^{-rs}\,ds$. Using \eqref{fourthbound}, we have an upper bound of 
$$
\Big[F(t)(1-\frac{(\gamma_1-1)}{2n}\Big)+G(t)\frac{\gamma_1-1}{2n}\Big]\Big[F(T)(1-\frac{(\gamma_2-1)}{2n}\Big)+G(T)\frac{\gamma_2-1}{2n}\Big]+\frac{C_5}{n^{3/2}}
$$
for the left side of \eqref{tinequality} (for some constant $C_5$), provided $n\ge n_2$. There is likewise a lower bound
$$
\Big[F(t)(1-\frac{(\gamma_2-1)}{2n}\Big)+G(t)\frac{\gamma_2-1}{2n}\Big]\Big[F(T)(1-\frac{(\gamma_1-1)}{2n}\Big)+G(T)\frac{\gamma_1-1}{2n}\Big]-\frac{C_6}{n^{3/2}}
$$
for the right side (for some constant $C_6$), again provided $n\ge n_2$. Thus \eqref{tinequality} will hold for $t$ and $n$, provided $n\ge n_2$ and 
$$
\frac{C_5+C_6}{n^{3/2}}<
\frac{\gamma_1-\gamma_2}{2n}\Big[F(t)G(T)-G(t)F(T)\Big].
$$
It is easily seen that $F(t)G(T)-G(t)F(T)=\int_0^t\int_t^Te^{-r(s+q)}({}_sp_x-{}_qp_x)\,dq\,ds$, and it follows that $F(t)G(T)-G(t)F(T)>0$ on $(0,T)$. It is therefore clear how to choose $n_0\ge n_2$ to ensure that the desired conclusion holds. 
\end{proof}

We close this section with the proof of the result in Section \ref{tontineannuitycagematch}
\begin{proof}[Proof of Theorem \ref{utilityinequality}]
As remarked earlier, the case $\gamma\neq 1$ follows immediately from (e) of Lemma \ref{betabound}. When $\gamma=1$, we have
$u(c)=\log c$ so $U_{n,1}^{\text{OT}}=\int_0^\infty e^{-rt}{}_tp_x \,E[\log(\frac{nc_0\,{}_tp_x}{N({}_tp_x)})]\,dt$, while 
$U_1^{\text{A}}=\int_0^\infty e^{-rt}{}_tp_x \,\log(c_0)\,dt$. Therefore 
$$
U_1^{\text{A}}-U_{n,1}^{\text{OT}}=\int_0^\infty e^{-rt}{}_tp_x \Big(E\Big[\log\big(\frac{N({}_tp_x)}{n}\big)\Big]-\log({}_tp_x)\Big)\,dt.
$$
The following result now completes the proof. 
\end{proof}

\begin{lemma}
$E[\log(N(p))]>\log(np)$ for $0<p<1$.
\end{lemma}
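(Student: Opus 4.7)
The desired inequality $E[\log N(p)]>\log(np)$ looks like it should follow from Jensen's inequality applied to $\log$, but Jensen gives the wrong direction: since $\log$ is concave, $E[\log N(p)]\le \log E[N(p)]=\log(1+(n-1)p)$, which upper-bounds rather than lower-bounds the left side. So the plan is to use a different elementary inequality, one that is tight at a single point and naturally produces a lower bound, and then to invoke the explicit formula for $E[n/N(p)]$ already computed in Lemma \ref{reciprocalbound}.

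Specifically, I would use the well-known inequality $\log x\ge 1-1/x$, valid for all $x>0$, with equality only at $x=1$. Setting $x=N(p)/(np)$ and taking expectations yields
$$
E[\log N(p)]-\log(np)\ge 1-np\,E\!\left[\frac{1}{N(p)}\right]=1-p\,E\!\left[\frac{n}{N(p)}\right].
$$

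Now Lemma \ref{reciprocalbound} provides the exact value $E[n/N(p)]=(1-(1-p)^n)/p$. Substituting gives
$$
E[\log N(p)]-\log(np)\ge 1-\bigl(1-(1-p)^n\bigr)=(1-p)^n,
$$
which is strictly positive for $0<p<1$, as required.

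The main obstacle is really the initial recognition that Jensen's inequality cannot be applied directly; once one switches to the inequality $\log x\ge 1-1/x$, the proof writes itself, because Lemma \ref{reciprocalbound} hands us the needed expectation in closed form. A pleasant bonus is that the argument also quantifies the gap: the excess is at least $(1-p)^n$, which goes to zero as $p\to 1$ (consistent with $N(1)=n$ deterministically) and approaches $1$ as $p\to 0$.
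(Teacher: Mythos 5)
Your proof is correct. The pointwise inequality $\log x\ge 1-1/x$ (the tangent-line bound for $\log$ at $x=1$, with equality only there) is valid, $N(p)\ge 1$ always so everything is well defined, and combining it with the closed form $E[n/N(p)]=(1-(1-p)^n)/p$ from Lemma \ref{reciprocalbound} gives the strict bound $E[\log N(p)]-\log(np)\ge(1-p)^n>0$. The paper reaches the same conclusion by a different route: it observes via H\"older that $a\mapsto E[(n/N(p))^a]^{1/a}$ is nondecreasing, identifies the $a\downarrow 0$ limit as the geometric mean $e^{E[\log(n/N(p))]}$ using l'H\^opital, and compares it with the $a=1$ value $E[n/N(p)]<1/p$ from the same Lemma \ref{reciprocalbound}. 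The two arguments are close cousins — both ultimately exploit that the geometric mean of $n/N(p)$ is dominated by its arithmetic mean, your version doing so through the elementary substitute $\log y\le y-1$ rather than through norm monotonicity — but yours is more self-contained (no H\"older, no limiting argument) and has the bonus of an explicit quantitative gap $(1-p)^n$, whereas the paper's machinery is already set up for its other H\"older-based lemmas and so costs it nothing extra there.
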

\begin{proof}
By Holder, $E\big[(\frac{n}{N(p)})^a\big]^{1/a}$ increases with $a$. By Lemma \ref{reciprocalbound} and l'H\^opital's rule,
\begin{align*}
\frac{1}{p}&>\lim_{a\downarrow 0}E[(\frac{n}{N(p)})^a]^{1/a}
=e^{\lim_{a\downarrow 0}\frac{1}{a}\log E[(\frac{n}{N(p)})^a]}\\
&=e^{\lim_{a\downarrow 0}E[(\frac{n}{N(p)})^a\log(\frac{n}{N(p)})]/E[(\frac{n}{N(p)})^a]}
=e^{-E[\log(\frac{N(p)}{n})]}.
\end{align*}
Taking log's shows the result.
\end{proof}

\subsection{Indifference Loading}
\begin{proof}[Proof of Theorem \ref{loadinginequality}]
Since $1<\gamma\le 2$, $c^{\gamma-1}$ is concave in $c$, so lies below its tangents. Therefore
$$
\theta(p)=E\Big[\Big(\frac{N(p)}{n}\Big)^{\gamma-1}\Big]
\le E\Big[p^{\gamma-1}+(\gamma-1)p^{\gamma-2}\Big(\frac{N}{n}-p\Big)\Big]
=p^{\gamma-1}+(\gamma-1)p^{\gamma-2}\frac{1-p}{n}.
$$
$c^{1/\gamma}$ is also strictly concave in $c$, so in the same way
$$
\beta(p)^{\frac{1}{\gamma}}=(p\theta(p))^{\frac{1}{\gamma}}
<(p^\gamma)^{\frac{1}{\gamma}}+\frac{1}{\gamma}(p^\gamma)^{\frac{1}{\gamma}-1}\cdot (\gamma-1)p^{\gamma-1}\frac{1-p}{n}
=p+\frac{(\gamma-1)(1-p)}{\gamma n}.
$$
Therefore 
$$
\int_0^\infty e^{-rt}\beta({}_tp_x)^{\frac{1}{\gamma}}\,dt < \frac{1}{c_0} + \frac{\gamma-1}{\gamma n}\Big(\frac{1}{r}-\frac{1}{c_0}\Big).
$$
By definition, 
$$
\frac{c_0^{-\gamma}}{1-\gamma}(1-\delta)^{1-\gamma}
=\frac{1}{1-\gamma}\Big(\int_0^\infty e^{-rt}\beta({}_tp_x)^{1/\gamma}\,dt\Big)^\gamma,
$$
and $1-c^{\frac{\gamma}{1-\gamma}}$ is also concave in $c$, so as before 
$$
\delta=1-\Big(c_0\int_0^\infty e^{-rt}\beta({}_tp_x)^{\frac{1}{\gamma}}\,dt\Big)^{\frac{\gamma}{1-\gamma}}\cdot 
<-\frac{\gamma}{1-\gamma}\cdot \frac{\gamma-1}{\gamma n}\Big(\frac{c_0}{r}-1\Big)=\frac{1}{n}\Big(\frac{c_0}{r}-1\Big).
$$
 \end{proof}
For any $\gamma$ one can derive (using the second moment of $N(p)$ now, as in the proof of Theorem \ref{asymptotictheorem}) the asymptotic result that $\delta n\to \frac{\gamma}{2}(\frac{c_o}{r}-1)$. But convergence turns out to be so slow that this precise asymptotic is of limited use. The slow convergence derives from the observation that, with Gompertz mortality, the time $t$ till ${}_tp_x$ reaches $\frac{1}{n}$ grows only at rate $b\log\log n$. (For the same reason, numerical computations are best carried out using $\log({}_tp_x)$ rather than ${}_tp_x$.) For example, with $\gamma=2$, $r=3\%$, age $x=50$, and Gompertz parameters $m=87.25$ and $b=9.5$, we obtain $\frac{\gamma}{2}(\frac{c_o}{r}-1)=0.6593$; But for $n=$ 10, 100, or 1000 we only have $\delta n=$ 0.2858, 0.3377, and 
0.3671; In fact, even with $n=7\times 10^9$ tontine participants (roughly the world's entire population)
%all postulated to share the same age and hazard rate 
we would only reach $\delta n = 0.4417$. 

Asymptotics become more reliable if tontine and annuity payments are capped at some advanced age $x+T$, as in Theorem \ref{asymptotictheorem}. The new loading $\tilde \delta$ now satisfies $\tilde \delta n\to \frac{\gamma}{2}(\frac{\tilde c_o}{r}(1-e^{-rT})-1)$, where $\tilde c_0$ is the payout from the capped annuity. With the same parameters as above and capping at age 120 it still takes $n=100,\!000$ to give $n\tilde\delta=0.4012$ (well approximating the $n=\infty$ value of $0.4301$). Convergence is more rapid with smaller $T$'s: with capping at age 110 we only need $n=1,\!000$ to give $n\tilde\delta=0.3642$ (approximating 0.3850) and for age 100 capping, even $n=100$ yields 0.2855 (approximating 0.2897).

\subsection{Certainty Equivalents and the Natural Tontine}
\label{moreoncertaintyequiv}
In Table  \ref{table08} we compare the welfare loss experienced by an individual with Longevity Risk Aversion $\gamma\neq 1$, if they participate in a natural tontine rather than an optimal one. To do so we calculate the ratio $\Gamma\ge 1$ of the certainty equivalents of the two tontines. This represents the initial deposit into a natural tontine needed to provide the same utility as a \$1 deposit into an optimal one. In this section, we discuss this comparison further. 

The natural tontine has utility 
$$
U_{n,\gamma}^{\text{N}}=\frac{c_0^{1-\gamma}}{1-\gamma}\int_0^\infty e^{-rt}{}_tp_x^{2-\gamma}\theta_{n,\gamma}({}_tp_x)\,dt.
$$
Therefore
\begin{align*}
\Gamma
=\left[\frac{U^{\text{OT}}_{n,\gamma}}{U^{\text{N}}_{n,\gamma}}\right]^{\frac{1}{1-\gamma}}
&=\left[\frac{D^{\text{OT}}_{n,\gamma}(1)^{-\gamma}}{c_0^{1-\gamma}\int_0^\infty e^{-rt}{}_tp_x^{2-\gamma}\theta_{n,\gamma}({}_tp_x)\,dt}\right]^{\frac{1}{1-\gamma}}\\
&=\frac{\big(\int_0^\infty e^{-rt}{}_tp_x\,dt\big)\big(\int_0^\infty e^{-rt}\beta_{n,\gamma}({}_tp_x)^{\frac{1}{\gamma}}\,dt)^{\frac{\gamma}{1-\gamma}}}{\big(\int_0^\infty e^{-rt}{}_tp_x^{2-\gamma}\theta_{n,\gamma}({}_tp_x)\,dt\big)^{\frac{1}{1-\gamma}}}.
\end{align*}

When we compute these values numerically, for $0<\gamma\le 2$, we get values quite close to 1. Moreover, if we let 
$n\to\infty$, then 
$\beta_{n,\gamma}(p)^{1/\gamma}\to p$ and $\theta_{n,\gamma}(p)\to p^{-(1-\gamma)}$, which makes $\Gamma\to 1$ asymptotically, as long as $\gamma\le 2$.

For $\gamma>2$ this is not a fair comparison, as the integral diverges and $U_{n,\gamma}^{\text{N}}=\infty$. It is worth understanding why. The integrand in the numerator involves $[p\theta_{n,\gamma}(p)]^{1/\gamma}\sim p^{\frac{1}{\gamma}}n^{\frac{1-\gamma}{\gamma}}$ which is well behaved as $p\to 0$. The denominator integrand involves $p^{2-\gamma}\theta_{n,\gamma}(p)$ however, which can be large while $p\to 0$, if $\gamma>2$. This means that for $\gamma>2$ the natural tontine utility is unduly influenced by the possibility of living to highly advanced ages. Even if there is only a single survivor, that survivor's payout will have dropped to quite low levels by the time it is actually improbable that anyone will live that long. And for $\gamma>2$ the negative consequences of the low payout dominate the small probability of surviving that long. 

Capping tontine payouts at an advanced age $x+T$, as in Theorem \ref{asymptotictheorem}, would enable comparisons for $\gamma>2$. For example, take age $x=60$ and let us compare with the certainty equivalent $\Gamma=1.0034$ at that age from Table \ref{table08} (ie. with $n=100$ and $\gamma=2$). Capping payouts at age 100, we can achieve $\tilde\Gamma=1.0032$ for $\gamma=4$ even with $n=50$, and $\tilde\Gamma=1.0037$ for $\gamma=10$ by going to $n=300$. With the capping age raised to 110 and $\gamma=4$, it instead takes $n=1400$ to achieve a comparable $\tilde\Gamma=1.0032$

\newpage

\begin{table}
\begin{center}
\begin{tabular}{||c||c||c||c||}
\hline\hline
\multicolumn{4}{||c||}{\textbf{Optimal Tontine Payout Function: Pool of Size}
$n=25$} \\ \hline\hline
LoRA ($\gamma $) & Payout Age 65 & Payout Age 80 & Payout
Age 95 \\ \hline\hline
0.5 & 7.565\% & 5.446\% & 1.200\% \\ \hline\hline
1.0 & 7.520\% & 5.435\% & 1.268\% \\ \hline\hline
1.5 & 7.482\% & 5.428\% & 1.324\% \\ \hline\hline
2.0 & 7.447\% & 5.423\% & 1.374\% \\ \hline\hline
4.0 & 7.324\% & 5.410\% & 1.541\% \\ \hline\hline
9.0 & 7.081\% & 5.394\% & 1.847\% \\ \hline\hline
\textbf{Survival} & ${}_0p_{65}=$100\% & ${}_{15}p_{65}=$72.2\% & ${}_{30}p_{65}=$16.8\% \\ \hline\hline
\multicolumn{4}{||c||}{\footnotesize Notes: Assumes $r=4\%$ and Gompertz Mortality ($m=88.72,b=10$)} \\ \hline\hline
\end{tabular}\medskip
\caption{Shows the optimal tontine payout function $d(t)=D_{n,\gamma}^{\text{OT}}({}_tp_x)$ at various ages. Note how insensitive it is to Longevity Risk Aversion $\gamma$, even for a small pool ($n=25$).}
\label{table06}
\end{center}
\end{table}

.
\newpage
.

\begin{table}[h]
\begin{center}
\begin{tabular}{||c||c|c|c|c|c||}
\hline\hline
\multicolumn{6}{||c||}{{\bf The Highest Annuity Loading $\delta$ You Are Willing to Pay}} \\ \hline\hline
\multicolumn{6}{||c||}{{\bf If a Tontine Pool of Size $n$ is Available}} \\ \hline\hline
LoRA $\gamma$ & $n=20$ & $n=100$ & $n=500$ & $n=1000$ & $n=5000$\\ \hline\hline
0.5 & 72.6 b.p. & 14.5 b.p. & 2.97 b.p. & 1.50 b.p. & 0.30 b.p. \\ \hline
1.0 &   129.8 b.p. & 27.4 b.p. & 5.74 b.p. & 2.92 b.p. & 0.60 b.p.  \\ \hline
1.5 & 182.4 b.p. & 39.8 b.p. & 8.45 b.p. & 4.31 b.p. & 0.89 b.p. \\ \hline
2.0 & 231.7 b.p. & 51.8 b.p. & 11.1 b.p. & 5.68 b.p. & 1.18 b.p. \\ \hline
3.0 & 323.1 b.p. &  75.1 b.p. &  16.3 b.p. & 8.38 b.p. & 1.75 b.p. \\ \hline
9.0 & 753.6 b.p. &  199.8 b.p. &  45.9 b.p. &  23.8 b.p. &  5.09 b.p. \\ \hline
\hline
\multicolumn{6}{||c||}{\footnotesize Assumes Age $x=60$, $r=3\%$ and Gompertz Mortality ($m=87.25,b=9.5$)} \\ \hline\hline
\end{tabular}\medskip
\caption{If the risk loading $\delta$ the annuity charges up front is higher than these amounts, the tontine is preferred.}
\label{table07}
\smallskip
\end{center}
\end{table}

.
\newpage
.

\begin{table}[h]
\begin{center}
\begin{tabular}{||c||c|c|c||}
\hline\hline
\multicolumn{4}{||c||}{{\bf Natural vs. Optimal Tontine}} \\ \hline\hline
\multicolumn{4}{||c||}{{\bf Certainty Equivalent for $n=100$}} \\ \hline\hline
Age $x$ & $\gamma=0.5$ & $\gamma=1$ & $\gamma=2$  \\ \hline\hline
30  &  1.000018	 &  1	 &  1.000215 \\  \hline
40  &  1.000026	 &  1	 &  1.000753 \\  \hline
50  &  1.000041	 &  1	 &  1.001674 \\  \hline
60  &  1.000067	 &  1	 &  1.003388 \\  \hline
70  &  1.000118	 &  1	 &  1.003451 \\  \hline
80  &  1.000225	 &  1	 &  1.009877\\
\hline\hline
\multicolumn{4}{||c||}{\footnotesize $r=3\%$ and Gompertz $m=87.25,b=9.5$} \\ \hline\hline
\end{tabular}\medskip
\caption{This shows how much must be invested in a natural tontine to match \$1 invested in a tontine optimized for the individual's LoRA $\gamma$.} 
\label{table08}
\smallskip
\end{center}
\end{table}

\begin{figure}[here]
\begin{center}
\includegraphics[width=0.9\textwidth]{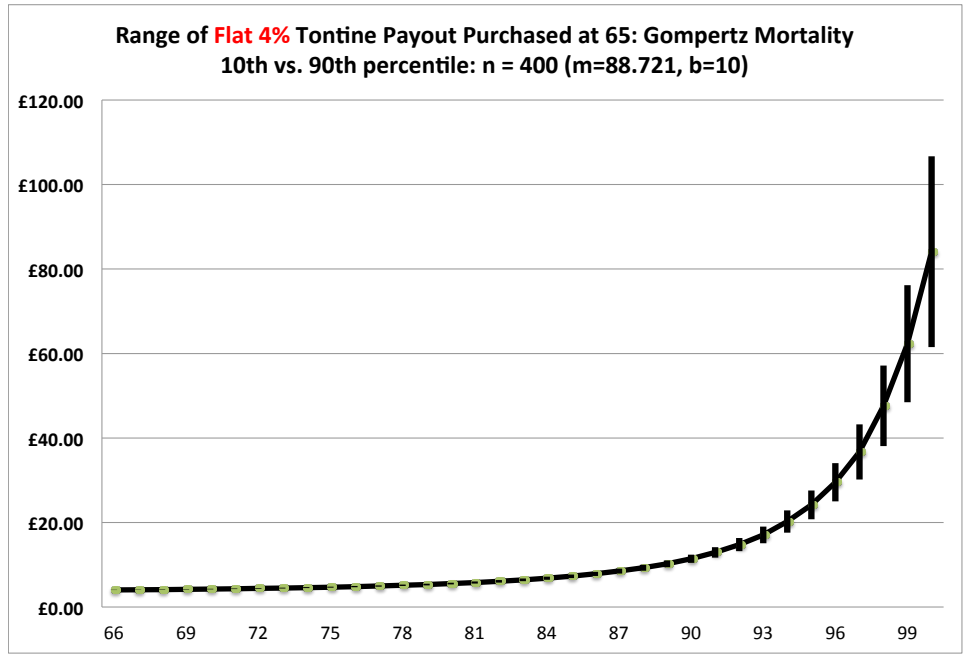} 
\caption{The range of $d_0=4\%$ tontine dividends during the first few decades of retirement is relatively low and predictable for a pool size in the hundreds. The dividends increase exponentially at later ages and the 80\% range is much wider as well. But this is not the only way to construct a tontine.}
\label{fig1}
\end{center}
\end{figure}

\begin{figure}[here]
\begin{center}
\includegraphics[width=0.9\textwidth]{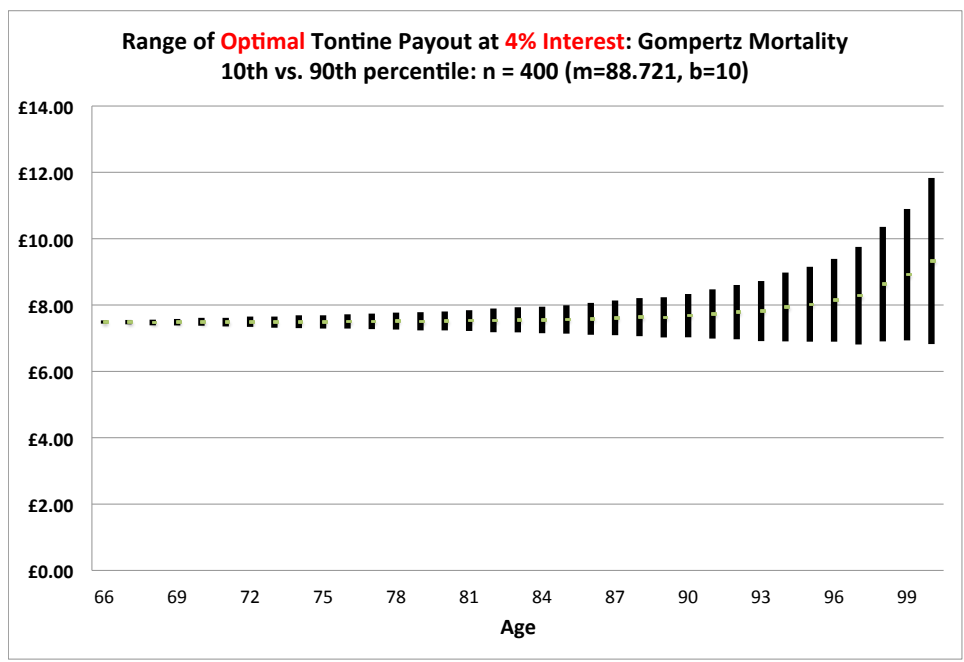} 
\caption{The optimal tontine pays survivors a cash value that is expected to remain relatively constant over time, conditional on survival; although the 80\% range of outcomes does increase at higher ages given the inherent uncertainty in the number of survivors from an initial pool of $n=400$. This structure is optimal for logarithfig3mic $\gamma=1$ utility and nearly optimal for all other levels of Longevity Risk Aversion.}
\label{fig2}
\end{center}
\end{figure}

\begin{figure}[here]
\begin{center}
\includegraphics[width=0.9\textwidth]{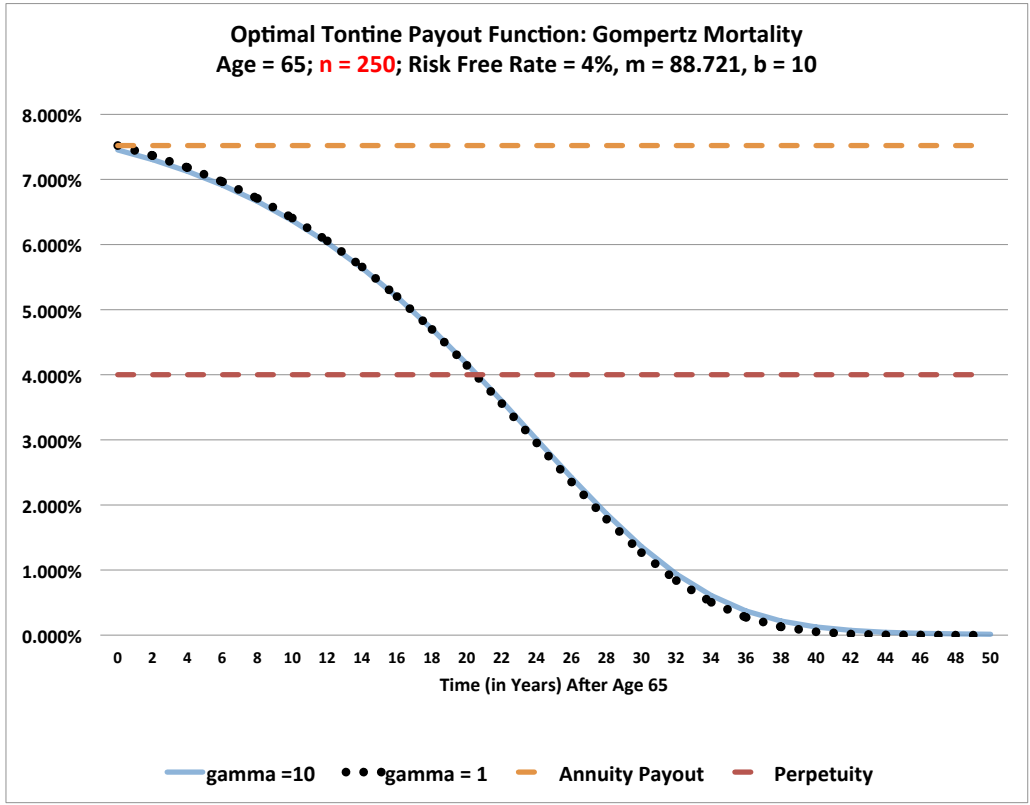} 
\caption{The difference between the optimal tontine payout function $d(t)$ for different levels of Longevity Risk Aversion $\gamma$ is barely noticeable once the tontine pool size is greater than $n=250$.}
\label{fig3}
\end{center}
\end{figure}

\begin{figure}[here]
\begin{center}
\includegraphics[width=0.9\textwidth]{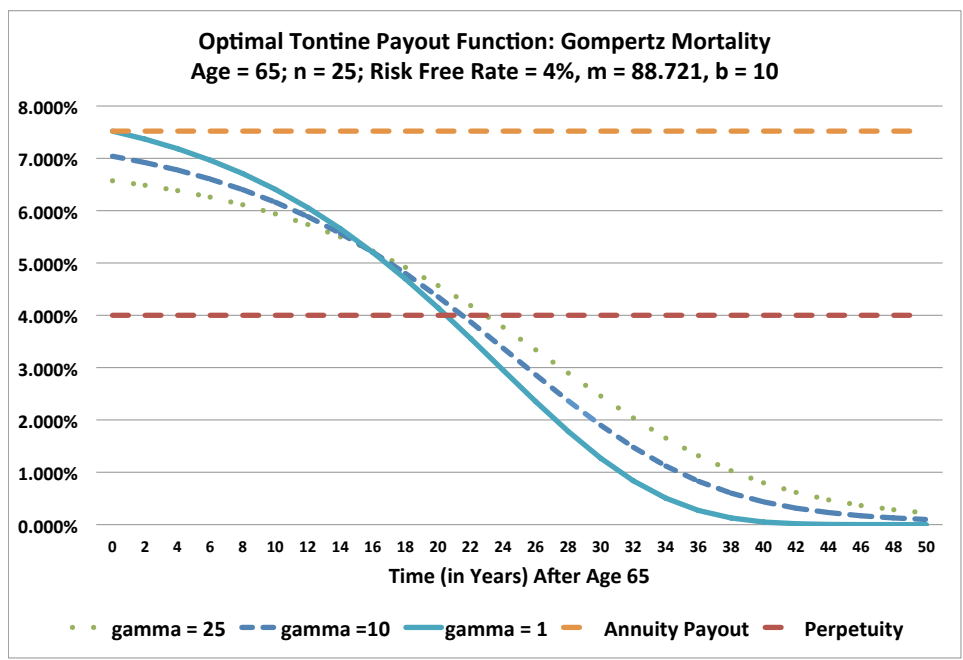} 
\caption{The optimal tontine payout function for someone with logarithmic $\gamma=1$ utility starts off paying the exact same rate as a life annuity regardless of the number of participants in the tontine pool. But, for higher levels of longevity risk aversion $\gamma$ and a relatively smaller tontine pool, the function starts off at a lower value and declines at a slower rate}.
\label{fig4}
\end{center}
\end{figure}

\begin{figure}[here]
\begin{center}
\includegraphics[width=0.9\textwidth]{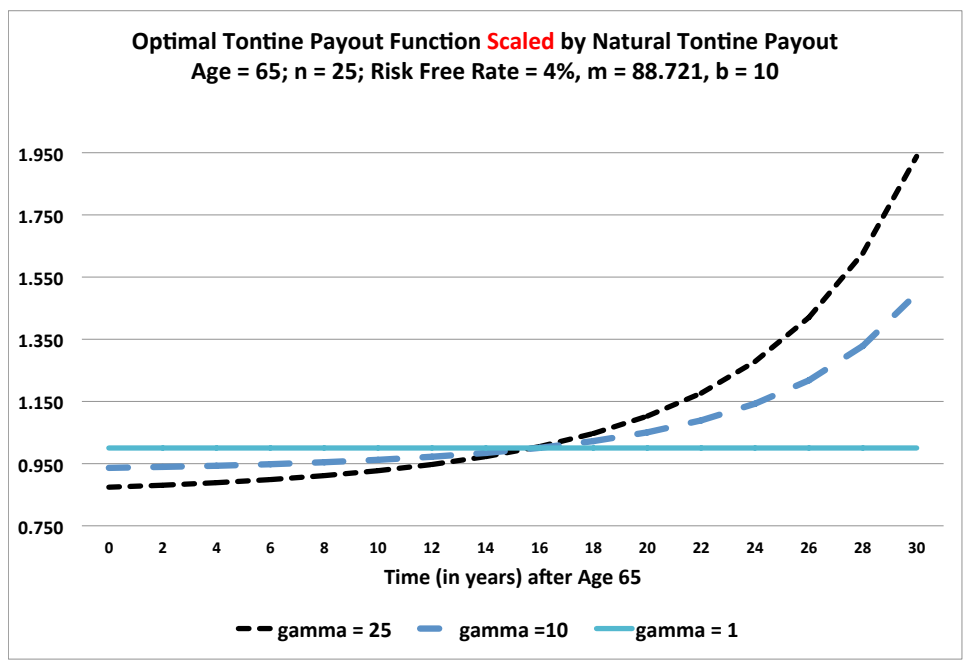} 
\caption{The ratio $D_{n,\gamma}^{\text{OT}}(_tp_x)/D_{\text{N}}({}_tp_x)$ of the optimal tontine payout to the natural tontine payout. This is flat for $\gamma=1$ since the natural tontine is optimal for logarithmic utility, but for rising $\gamma$ we see increased reserving against the risk of living longer than expected.}
\label{fig5}
\end{center}
\end{figure}

\newpage

\begin{figure}[here]
\begin{center}
\includegraphics[width=0.9\textwidth]{} 
\caption{Shows an actuarially fair life annuity, and a natural tontine payout (LoRA $\gamma=1$). They agree at $t=0$, but the tontine pays more at time zero once loading is included. The utility of the loaded life annuity may be lower than that of the tontine, depending on the size of the loading. }
\label{fig6}
\end{center}
\end{figure}

\end{document}